
\documentclass[letterpaper, 10 pt, conference]{ieeeconf}  

\IEEEoverridecommandlockouts                              

\overrideIEEEmargins                                      


\usepackage{enumerate}
\usepackage{mathtools}
\usepackage{subfigure}
\usepackage{amssymb}

\usepackage{amsthm}

\usepackage{color}
\usepackage{amsmath}
\usepackage{amssymb}
\usepackage{graphicx}
\usepackage{comment,xspace}
\usepackage{hyperref}
\usepackage{fancybox}



\setcounter{secnumdepth}{3}



\newtheorem{theorem}{Theorem}[section]

\newtheorem{lemma}[theorem]{Lemma}




\addtolength{\abovecaptionskip}{-1mm}
\addtolength{\belowcaptionskip}{-4mm}
\addtolength{\abovedisplayskip}{-1.5mm}
\addtolength{\belowdisplayskip}{-1.3mm}

\renewenvironment{quote}{\list{}{\leftmargin=0.in\rightmargin=0.in}\item[]}{\endlist}

\newcommand{\mpmargin}[2]{#1}

\title{\LARGE \bf
A Queueing Network Approach to the Analysis and Control of Mobility-On-Demand Systems \vspace{-0.5em}
}

\author{Rick Zhang and Marco Pavone
\thanks{Rick Zhang and Marco Pavone are with the Department of Aeronautics \& Astronautics, Stanford University, Stanford, CA 94305 {\tt\small \{rickz, pavone\}@stanford.edu}}
}

\begin{document}

\maketitle
\thispagestyle{empty}
\pagestyle{empty}

\begin{abstract}
This paper presents a queueing network approach to the analysis and control of mobility-on-demand (MoD) systems for urban personal transportation. A MoD system consists of a fleet of vehicles providing one-way car sharing service and a team of drivers to rebalance such vehicles. The drivers then rebalance themselves by driving select customers similar to a taxi service. We model the MoD system as two coupled closed Jackson networks with passenger loss. We show that the system can be approximately balanced by solving two decoupled linear programs and exactly balanced through nonlinear optimization. The rebalancing techniques are applied to a system sizing example using taxi data in three neighborhoods of Manhattan, which suggests that the optimal vehicle-to-driver ratio in a MoD system is between 3 and 5. Lastly, we formulate a real-time closed-loop rebalancing policy for drivers and demonstrate its stability (in terms of customer wait times) for typical system loads.

\end{abstract}

\section{Introduction}




Car sharing promises to be a cost effective and sustainable alternative to private urban mobility by allowing a split of hefty ownership costs, increasing vehicle utilization, and reducing urban infrastructure needed for parking \cite{WJM-CEBB-LDB:10}. One type of vehicle-sharing service, called mobility-on-demand (MoD), consists of stacks or racks of light electric vehicles parked at many different stations throughout a city \cite{WJM-CEBB-LDB:10}. Each customer arrives at a station, takes a vehicle to the desired destination, and drops off the vehicle at that station. Due to the asymmetry of customer demands, vehicles will aggregate at some stations and be depleted elsewhere, causing the system to become unbalanced \cite{MP-SLS-EF-DR:12}. Rebalancing the system has been studied in \cite{MP-SLS-EF-DR:12, RZ-MP:14, JC-KHL-CKYT:13} for MoD systems with self-driving cars and in \cite{SLS-MP-MS-ea:13} for human-driven MoD systems under a fluidic model. To rebalance the MoD system in the absence of self-driving cars, the strategy is to hire human drivers to drive excess vehicles to stations where they are needed. The drivers then themselves are ``rebalanced'' by driving select customers to their destinations as a taxi service. In this way, the MoD system can be viewed as an one-way customer-driven car sharing service mixed with a taxi service. 

The objective of this paper is to develop a queueing network framework for the analysis and control  of (human-driven) mobility-on-demand systems. We then apply the insights from this queueing framework to develop real-time closed-loop policies to control such systems. On the modeling and analysis side, we consider a model similar to the one proposed in \cite{SLS-MP-MS-ea:13}, which hinges upon the optimization of rebalancing \emph{rates} and is studied under a fluidic approximation (where customers, drivers, and vehicles are modeled as a continuum). The model in \cite{SLS-MP-MS-ea:13} offers insights into the minimum number of vehicles and drivers required in a MoD system but does not provide key performance metrics in terms of quality of service (i.e., the availability of vehicles at stations or the customer wait times). These shortcomings are addressed by \cite{RZ-MP:14} for an autonomous MoD system, where the system is modeled as a stochastic queueing network from which key performance metrics are derived. This paper can be viewed as an extension of the models in \cite{SLS-MP-MS-ea:13, RZ-MP:14} to human-driven MoD systems taking into account both vehicles and rebalancing drivers. On the control side, real-time closed-loop policies for one-way car sharing systems have been studied in \cite{DJ-GHAC-CB:14} and \cite{RN-EMH:11} with the objective of maximizing profit, where the rebalancing of vehicles is modeled as a cost. Our paper differs from these works in two key respects: 1) in addition to minimizing cost, our key objective is quality of service for customers in terms of vehicle availabilities and wait times, and 2) we explicitly control the movement of rebalancing drivers which makes the system self-contained (e.g., drivers do not need to rely on public transit to rebalance themselves).

Our contribution in this paper is fourfold. First, we model a MoD system within a queueing network framework that takes into account the \emph{coupled} rebalancing of vehicles and drivers. Specifically, our approach is to model a MoD system as two coupled closed Jackson networks with passenger loss.
Second,  we present two approaches for the open-loop control of a MoD system. In the first approach, the optimal rebalancing parameters are solved by two decoupled linear programs, and are therefore efficient to compute, but only approximately guarantee balance of the system. In the second approach, nonlinear optimization techniques are used (with higher computational cost) to balance the system exactly. 
 Third, we apply such approaches  to the problem of system sizing. Our key finding is that the optimal vehicle-to-driver ratio in a MoD system should be between 3 and 5.  Finally, leveraging the aforementioned open-loop control strategies, we devise a real-time \emph{closed-loop} rebalancing policy and demonstrate its stability under typical system loads.

The rest of this paper proceeds as follows: Section \ref{sec:back} reviews some key results in the theory of Jackson networks. Section \ref{sec:model} describes in detail our queueing network model of a MoD system. Section \ref{sec:rebalancing} offers the approaches for the open-loop control of a MoD system.
 The rebalancing techniques are then applied to a system sizing example based on taxi data in Manhattan. In Section \ref{sec:realTime} we introduce a real-time closed-loop control policy useful for practical systems. Finally, in Section \ref{sec:conc} we draw our conclusions and provide directions for future research.


\section{Background Material}\label{sec:back}

In this \mpmargin{section}{If you have time later, think a little bit more about probability distribution} we review several useful results and techniques from the theory of queueing networks, in particular the theory of Jackson networks. We consider a directed graph $G(V,E)$ where the set of vertices $V$ represent first-in-first-out service nodes or queues. Discrete agents (often referred to as customers in the literature) arrive from outside the network according to a stochastic process and are serviced at each node. The agents then travel to other nodes in the network or leave the system. A network in which a fixed number of agents move among the nodes with no external arrivals or departures is referred to as a \emph{closed} network (in contrast, agents in \emph{open} networks arrive externally and eventually depart from the network). A Jackson network is a class of Markovian queueing networks whereby  the routing distribution, $r_{ij}$, is stationary  and the service rate at each node $i$, $\mu_{i}(x_i)$, only depends on the number of agents at that node, $x_i$ \cite[p. 9]{RS:99}. Jackson networks are part of a broader class of networks called BCMP networks \cite{FB-KMC-RRM-FGP:75} that are known to admit product-form stationary distributions, making them relatively easy to analyze. The stationary  distribution $\{\pi_i\}_{i=1}^{|V|}$ of the underlying Markov chain of a Jackson network satisfies the balance equations
\begin{equation}
\pi_i = \sum_{j \in V} \pi_j r_{ji}, \;\;\;\; \text{for all } i \in V. 
\label{eq:traffic}
\end{equation}
For a closed network, \eqref{eq:traffic} does not yield a unique solution and only determines $\pi = (\pi_1\ \pi_2\ ...\ \pi_{\lvert V \rvert})^T$ up to a constant factor. Accordingly, for a closed network $\pi$ is referred to as the \emph{relative} throughput. The stationary probability distribution of a closed Jackson network with $m$ agents is given by  
\begin{equation*}
\mathbb{P}(x_1, x_2, ..., x_{\lvert V \rvert}) = \frac{1}{G(m)} \prod_{j = 1}^{\lvert V \rvert} \pi^{x_j}_j \prod_{n=1}^{x_j} \mu_j(n)^{-1},
\end{equation*}
where $G(m)$ is a normalization constant required to make $\mathbb{P}(x_1, x_2, ..., x_{\lvert V \rvert})$ a probability measure. It turns out that many performance metrics of the network can be expressed in terms of the normalization constant $G(m)$. Two such performance metrics are of interest to us: 1) the actual throughput of each node (see \cite[p. 27]{RS:99}) is given by
\begin{equation}
\Lambda_i(m) = \pi_i \, G(m-1)/G(m),
\label{eq:throughput}
\end{equation}
and 2) the probability that a node has at least one agent, referred to as the \emph{availability} of node $i$ (\cite{RZ-MP:14, DKG-CHX:11}), is given by
\begin{equation}
A_i(m) = \gamma_i \, G(m-1)/G(m),
\label{eq:Ai}
\end{equation}
where $\gamma_i = \pi_i / \mu_i(1)$ is referred to as the relative utilization of node $i$. 

In general, solving for $G(m)$ is quite computationally expensive, especially when $m$ is large. A well-known technique called mean value analysis (MVA) \cite{MR-SSL:80} enables us to compute the mean values of the performance metrics without explicitly solving for $G(m)$. MVA computes the mean wait times and queue lengths at each node of the Jackson network at each iteration. The algorithm begins by assuming the system contains a single agent and adds one agent to the system at each iteration until the desired number of agents is reached. The MVA algorithm is described in detail in \cite{RZ-MP:14, SSL:83}, and is used extensively in this paper to compute performance metrics formally introduced in Section \ref{sec:formulation}.

\section{Model Description and Problem Formulation} \label{sec:model}
\subsection{MoD system model}
In this section we formally describe the MoD system under consideration and cast it within a queueing network framework by modeling the system as two coupled, closed Jackson networks.
We consider $N$ stations with unlimited parking capacity placed in a given geographical area, $m_v$ vehicles that can be rented by customers for one-way trips between stations, and $m_d$ ``rebalancing'' drivers employed to rebalance the vehicles by driving them to the stations where they are needed. The drivers then ``rebalance'' themselves by driving customers to their destinations, operating as a taxi service. These assumptions require each driver to \emph{always} have access to a vehicle since the driver's task involves driving a vehicle with or without a customer (A driver left at a station without a vehicle is effectively ``stranded''). We therefore pose the constraint $v_i \geq d_i$, where $v_i$ is the number of vehicles at station $i$ and $d_i$ is the number of drivers at station $i$. With this requirement, we may view the MoD system as two systems operating in parallel -- a one-way customer-driven car sharing service with $m_v - m_d$ vehicles and a taxi service with $m_d$ vehicles. It is worth noting that there are other, more elaborated ways of managing a MoD system which we do not address in this paper. For example, in \cite{SLS-MP-MS-ea:13}, the authors also consider customers potentially riding with multiple drivers. One could also envision a system where drivers can drive other drivers or take public transportation to stations with excess cars. The extension of our model to such cases is an interesting avenue for future research.
\begin{figure}[htbp]
	\centering \vspace{-1.3em}
		\subfigure{\label{fig:mod} \includegraphics[width=0.31\textwidth]{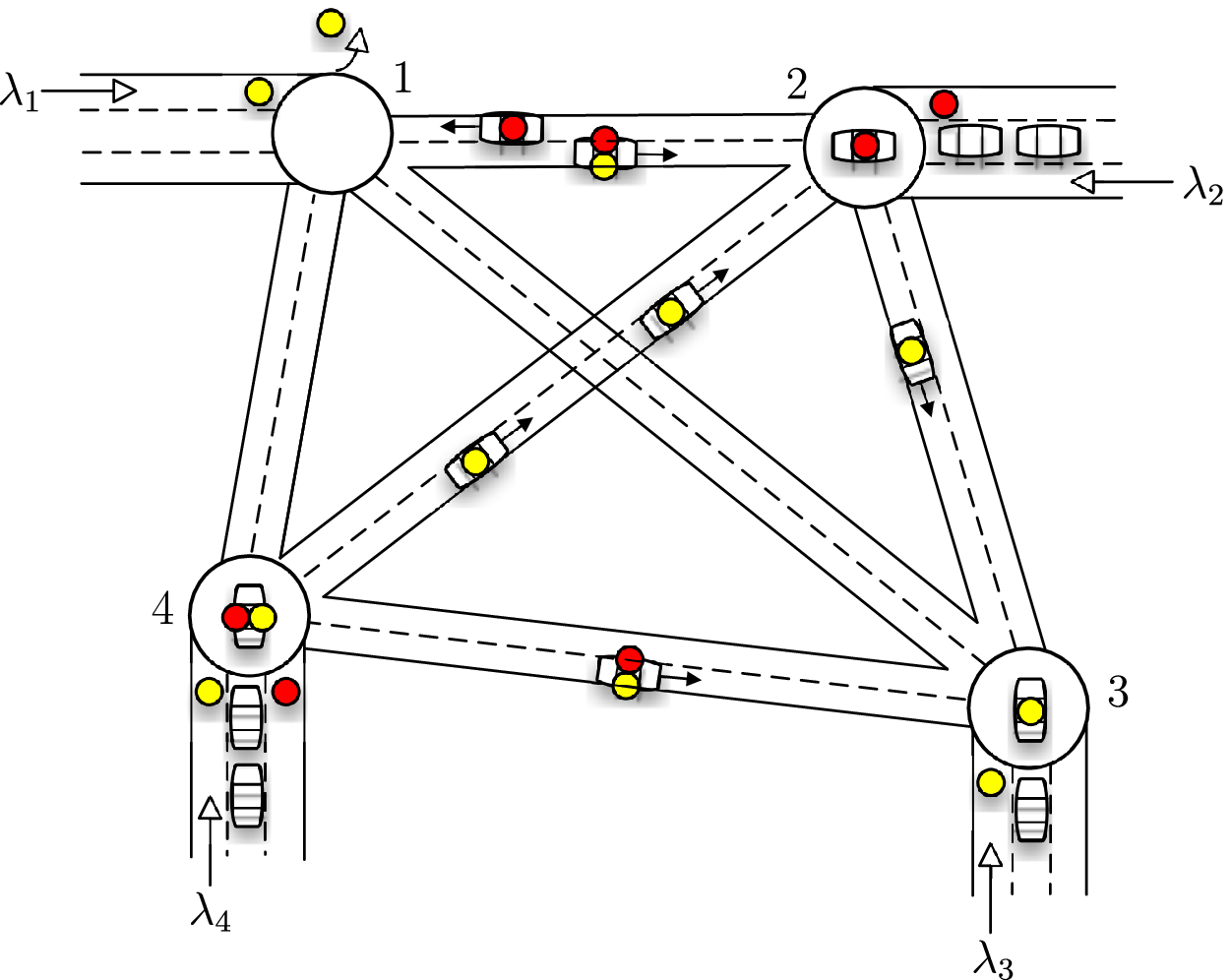}} \hspace{-1em}
		\subfigure{\label{fig:station} \includegraphics[width=0.11\textwidth]{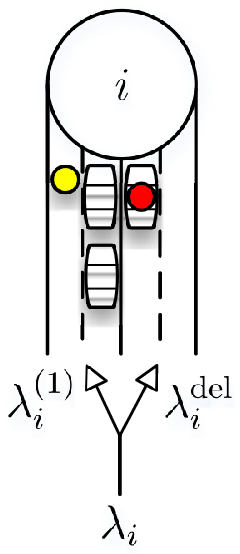}} \vspace{-0.5em}
	\caption{Left: MoD system model. Yellow dots represent customers and red dots represent rebalancing drivers. Customers can drive themselves or ride with a rebalancing driver. Customers are lost if no vehicles are available (station 1). Right: each customer arriving at station $i$ is delegated to either System 1 (customer-driven vehicles) or System 2 (taxi system).}
\label{fig:modsys}
\end{figure}

Customers arrive to station $i$ according to a Poisson process with parameter $\lambda_i$. Upon arrival at station $i$, the customer selects a destination $j$ with probability $p_{ij}$, where $p_{ij} \geq 0$, $p_{ii} = 0$, and $\sum_j p_{ij} = 1$. Furthermore, we assume that the probabilities $\{p_{ij}\}_{ij}$ constitute an irreducible Markov chain. The customer can travel to his/her destination in one of two ways: 1) the customer drives a vehicle to his/her destination, or 2) the customer is taken to his/her destination by a rebalancing driver. The travel time from station $i$ to station $j$ is an exponentially distributed random variable with mean $T_{ij} > 0$. Justification of assumptions such as Poisson arrivals and exponential travel times is discussed in \cite{RZ-MP:14}. We employ a ``passenger loss'' model similar to \cite{RZ-MP:14, DKG-CHX:11}, where if a vehicle is not available upon the arrival of a customer, the customer immediately leaves the system. However, due to the additional complexity of our MoD model (a one-way car sharing service and a taxi service in parallel) the passenger loss assumption is more involved. We assume that upon arrival at a station, a customer is delegated to one of the two parallel systems by the MoD service operator (see Fig. \ref{fig:modsys}). The customer is lost if there are no available vehicles in \emph{the system to which he/she was delegated}. For example, if a customer is delegated to the taxi system and no taxis are immediately available, the customer \emph{cannot} switch over to the other system and drive himself/herself to the desired destination. The modeling consequences of this assumption will be further discussed in the next section. The performance criterion of interest in this case is the probability a customer will find an available vehicle (both empty vehicles and taxis) at each station. In Section \ref{sec:realTime} we will relax the passenger loss assumption and investigate the more realistic scenario where customers form a queue to wait for available vehicles. The performance of the system is then measured by customer wait times.

\subsection{Jackson network model of a MoD system} \label{sec:jmodel}
We now formally cast the MoD model described in the previous section within a queueing network framework. The key is to construct an abstract queueing network where the stations are modeled as single-server (SS) nodes and the roads as infinite-server (IS) nodes, as done in \cite{RZ-MP:14, DKG-CHX:11}. Vehicles form a queue at each SS node while waiting for customers and are ``serviced'' when a customer arrives. The vehicle then moves from the SS node to the IS node connecting the origin to the destination selected by the customer. After spending an exponentially distributed amount of time (with mean $T_{ij}$) in the IS node, the vehicle moves to the destination SS node. This setup is then a \emph{closed Jackson network with respect to the vehicles}. To capture the idea that the MoD system consists of two systems (customer-driven system and taxi system) operating in parallel, we model the MoD system as two coupled closed Jackson networks. More formally, let System 1 represent the Jackson network of $m_v - m_d$ customer-driven vehicles, and System 2 represent the network of $m_d$ taxis. Let $S^{(k)}$ represent the set of SS nodes and $I^{(k)}$ represent the set of IS nodes in the $k^{\text{th}}$ Jackson network, where $k = \{1,2\}$. For each network, each SS node is connected to every other SS node through an IS node. Thus, each network consists of $N + N(N-1) = N^2$ nodes (the IS node from station $i$ to itself is not represented since $p_{ii} = 0$). For each IS node $i \in I^{(k)}$, let Parent$(i)$ and Child$(i)$ be the origin and destination of $i$, respectively. The routing matrix $\{r^{(k)}_{ij}\}_{ij}$ in Jackson network $k$ can then be written as
\begin{equation*}
r^{(k)}_{ij}\! =\!
\begin{cases}
	p^{(k)}_{il} & \! i \in S^{(k)}, j \in I^{(k)}, i = \text{Parent}(j), l = \text{Child}(j), \\
	1 & \! i \in I^{(k)}, j \in S^{(k)}, j = \text{Child}(i), \\
	0 &\! \text{otherwise},
\end{cases}
\label{eq:rij}
\end{equation*}
where the first case is the movement from a SS node to an IS node and the second case is from an IS node to its unique destination SS node. The service times at each node are exponentially distributed with mean service rates 
\begin{equation*}
\mu^{(k)}_{i}(n) = 
\begin{cases}
	\lambda^{(k)}_{i} & \hspace{-0.3em}\text{if } i \in S^{(k)}, \\
	\frac{n}{T_{jl}}& \hspace{-0.3em}\text{if } i \in I^{(k)}, j = \text{Parent}(i), l = \text{Child}(i),
\end{cases}
\label{eq:muij}
\end{equation*}
where $n$ is the number of vehicles in the IS node. 
With this formulation we have defined two closed Jackson networks of the same form as in \cite{RZ-MP:14}, amenable to analysis. 

We now return to the customer arrival process and the loss model assumption. Recall that customers arrive at station $i$ according to a Poisson process with rate $\lambda_i$. Upon arrival, and depending on the destination, a customer is first delegated to either System 1 or System 2. This can be seen as a Bernoulli splitting of the customer arrival process into two Poisson processes for each desired destination. Denote by $\lambda^{(1)}_i$ the total rate of customers delegated to System 1, by $p^{(1)}_{ij}$ the routing probabilities associated with System 1 ($p^{(1)}_{ij} \geq 0$, $p^{(1)}_{ii}=0$, $\sum_j p^{(1)}_{ij}=1$), by $\lambda_i^{\mathrm{del}}$ the total rate of customers delegated to System 2, and by $\eta_{ij}$ the routing probabilities associated with System 2. We have the relationship
\begin{equation*}
\lambda_i = \lambda^{(1)}_i + \lambda_i^{\mathrm{del}}
\end{equation*}
for each station $i$. 
We also define $q_i$ to be the total fraction of customers delegated to System 1 at station $i$, i.e.,  $q_i = \lambda^{(1)}_i/\lambda_i$. We can also write $1 - q_i = \lambda_i^{\mathrm{del}}/\lambda_i$. The routing probabilities are related by
\begin{align}
p_{ij} &= \mathbb{P}(i \rightarrow j \mid \text{System 1}) \, p^{(1)}_{ij} + \mathbb{P}(i \rightarrow j \mid \text{System 2})\,  \eta_{ij} \nonumber \\
&= q_i \, p^{(1)}_{ij} + (1-q_i)\,  \eta_{ij}.
\label{eq:prob1}
\end{align}
We can equivalently say that the Poisson rate of customers originating at station $i$ and headed for station $j$ is $\lambda_i \, p_{ij}$. The arrival rate of these customers to System 1 is then $\lambda^{(1)}_i  p^{(1)}_{ij}$ and the arrival rate to System 2 is $\lambda_i^{\mathrm{del}}  \eta_{ij}$. Thus relation \eqref{eq:prob1} can be rewritten as 
\begin{equation}
\lambda_i\,  p_{ij} = \lambda^{(1)}_i p^{(1)}_{ij} + \lambda_i^{\mathrm{del}}   \eta_{ij}.
\end{equation} 
If the delegation process is known (i.e., $\lambda_i^{\mathrm{del}}$ and $\eta_{ij}$ are known), the routing probabilities for System 1 can be solved by rearranging \eqref{eq:prob1} as
\begin{equation}
p^{(1)}_{ij} = \frac{1}{q_i}\, p_{ij} - \frac{1-q_i}{q_i}\, \eta_{ij}.
\label{eq:prob2}
\end{equation}
In Section \ref{sec:formulation} we will describe in detail how to solve for $\lambda_i^{\mathrm{del}}$ and $\eta_{ij}$. Arrival rates $\lambda^{(1)}_i$, routing probabilities $p^{(1)}_{ij}$, and mean travel times $T_{ij}$ fully describe the System 1 Jackson network.  

Now we consider the second Jackson network, System 2, which models the $m_d$ vehicles operating as a taxi service. This network must not only provide service to customers but also rebalance the MoD system to ensure quality of service. To incorporate the notion of vehicle rebalancing, we use the concept of ``virtual'' customers as in \cite{RZ-MP:14}. Virtual customers are generated at station $i$ according to a Poisson process with parameter $\psi_i$ and routing probabilities $\xi_{ij}$, independent from the real customer arrival process. Virtual customers are lost upon arrival if a taxi is not immediately available, just like real customers. In this way, virtual customers promote rebalancing while not enforcing a strict rebalancing rate, which is key to retaining tractability in the model. The overall customer arrival rate (real and virtual) at station $i$ for System 2 is 
\begin{equation*}
\lambda^{(2)}_i = \lambda_i^{\mathrm{del}} + \psi_i.
\end{equation*} 
With respect to the vehicles, $\lambda^{(2)}_i$ is the exponentially distributed service rate at SS node $i$. The routing probabilities for this network can be defined as
\begin{align}
p^{(2)}_{ij} &= \mathbb{P}(i \rightarrow j \mid \text{ virtual})\,  \xi_{ij} + \mathbb{P}(i \rightarrow j \mid \text{ real})\,  \eta_{ij} \nonumber \\
&= \frac{\psi_i}{\lambda^{(2)}_i}\, \xi_{ij} + \frac{\lambda_i^{\mathrm{del}}}{\lambda^{(2)}_i} \, \eta_{ij} \nonumber \\
&= p_i \, \xi_{ij} + (1- p_i)\,  \eta_{ij},
\end{align}
where $p_i = \frac{\psi_i}{\lambda^{(2)}_i}$, similar to the definition in \cite{RZ-MP:14}. 

To summarize our Jackson network model, customers arrive at station $i$ headed for station $j$ according to a Poisson process with rate $\lambda_i \, p_{ij}$. Upon arrival, each customer is delegated to one of two systems, the customer-driven system (System 1) or the taxi system (System 2). The probability of the customer (going from station $i$ to $j$) being delegated to System 1 is $\frac{\lambda^{(1)}_i p^{(1)}_{ij}}{\lambda_i \, p_{ij}}$ and the probability of the customer delegated to System 2 is $\frac{\lambda_i^{\mathrm{del}}  \eta_{ij}}{\lambda_i \, p_{ij}}$ (from \eqref{eq:prob2}). Once the customer has been delegated, if he/she finds the station empty of vehicles, the customer immediately leaves the system. Once delegated, a customer cannot switch from System 1 to System 2 or vice versa. We note that in the same way that $\psi_i$ represents the rebalancing-promoting rate of vehicles in the MoD system, $\lambda_i^{\mathrm{del}}$ represents the rebalancing-promoting rate of the drivers. Together, the parameters $\psi_i$, $\xi_{ij}$, $\lambda_i^{\mathrm{del}}$, and $\eta_{ij}$ constitute the open-loop controls for our model of a MoD system. The open-loop control problem is formalized and solved in Section \ref{sec:rebalancing}.

\subsection{Performance criteria} \label{sec:formulation}
Our task to control the MoD system involves optimizing the parameters $\lambda_i^{\mathrm{del}}$ (rebalancing the drivers) and $\psi_i$ (rebalancing the vehicles) as well as the routing probabilities $\eta_{ij}$ and $\xi_{ij}$. The key performance metric is the availability of vehicles (the probability that a customer will find an available vehicle), given by \eqref{eq:Ai}. In \cite{DKG-CHX:11} it was shown that for a closed Jackson network of the form described in the previous section, the availability satisfies $\lim_{m \rightarrow \infty} A_i(m) = \gamma_i / \gamma_S^{\text{max}}$, for all $i \in S$, where $\gamma_i$ is the relative utilization at node $i \in S$, $S$ is the set of station nodes, and $\gamma_S^{\text{max}}:= \text{max}_{i \in S} \, \gamma_i$. As the number of vehicles increases, the set of stations $B:=\{i \in S: \gamma_i = \gamma_S^{\text{max}}\}$ will have availability approaching one while all other stations will have availability strictly less than one. Thus, a natural notion of rebalancing, introduced in \cite{RZ-MP:14} for autonomous MoD systems, is to ensure that $A_i(m) = A_j(m)$ for all $i,j \in S$ (or equivalently $\gamma_i = \gamma_j$ for all $i, j \in S$, as implied by \eqref{eq:Ai}). The relative utilizations for each Jackson network are defined as follows 
\begin{align*}
\gamma^{(1)}_i &= \frac{\pi^{(1)}_i}{\mu^{(1)}_i} = \frac{\pi^{(1)}_i}{\lambda_i - \lambda_i^{\mathrm{del}}} \;\;\;\; \forall i \in S^{(1)}, \\
\gamma^{(2)}_i &= \frac{\pi^{(2)}_i}{\mu^{(2)}_i} = \frac{\pi^{(2)}_i}{\lambda_i^{\mathrm{del}} + \psi_i} \;\;\;\; \forall i \in S^{(2)},
\end{align*}
where $\pi_i^{(k)}, i \in S^{(k)}, k = \{1,2\}$ satisfies \eqref{eq:traffic}.
For autonomous MoD systems, the constraint $\gamma_i = \gamma_j$ embodies two features: 1) fairness, as characterized by equal availability across all stations, and 2) performance, since the availability at each station approaches 100\% as the number of vehicles increases. We will apply this constraint to both Jackson networks in our MoD system as firstly it is a direct generalization of the approach used for autonomous MoD systems and secondly it yields a linear optimization problem (Section \ref{sec:reb1}) that is easy to compute and scale to large systems. However, as discussed in Section \ref{sec:realcustomers}, such approach only approximately balances the system (even though the approximation is often remarkably good). We then introduce a modified approach in Section \ref{sec:nonlinear} that relies on nonlinear optimization, which does ensure fairness while maintaining system performance, but incurs a higher computational cost. Collectively, the open-loop control approaches of Section \ref{sec:rebalancing} are useful for analysis and design tasks such as system sizing (Section \ref{sec:fleetsizing}) and drive the development of closed-loop policies (Section \ref{sec:realTime}).

\section{Analysis and Design of MoD Systems} \label{sec:rebalancing}

\subsection{Approximate MoD rebalancing} \label{sec:reb1}
In this section we formulate a linear optimization approach to (approximately) rebalance a MoD system. Specifically, we would like to manipulate our control variables $\lambda_i^{\mathrm{del}}$, $\psi_i$, $\eta_{ij}$, and $\xi_{ij}$ such that $\gamma^{(1)}_i = \gamma^{(1)}_j$ for all $i,j \in S^{(1)}$ and $\gamma^{(2)}_i = \gamma^{(2)}_j$ for all $i,j \in S^{(2)}$. To minimize the cost of MoD service, we would like to simultaneously minimize the mean number of rebalancing vehicles on the road (minimize energy use and possibly congestion), given by $\sum_{i,j} T_{ij}\,\xi_{ij}\,\psi_i$, as well as the number of rebalancing drivers needed, given by $\sum_{i,j} T_{ij}\, (\xi_{ij}\, \psi_i + \eta_{ij}\, \lambda_i^{\mathrm{del}})$. We can state this multi-objective problem as follows:
\begin{quote}{\bf MoD Rebalancing Problem (MRP):} Given a MoD system modeled as 2 closed Jackson networks, solve
\begin{align}
&\underset{\lambda_i^{\mathrm{del}}, \psi_i, \eta_{ij}, \xi_{ij}}{\text{minimize}} && \sum_{i,j} T_{ij}\, \xi_{ij}\, \psi_i \text{ and } \sum_{i,j} T_{ij}\, (\xi_{ij}\, \psi_i + \eta_{ij}\, \lambda_i^{\mathrm{del}}) \notag \\
&\text{subject to} && \gamma^{(k)}_i = \gamma^{(k)}_j \;\;\;\; i,j \in S^{(k)}, k = 1,2  \label{problem1}  \\
& && \sum_j \eta_{ij} = 1, \;\;\;\; \sum_j \xi_{ij} = 1, \notag \\
& && \eta_{ij} \geq 0,\;\; \xi_{ij} \geq 0,\;\; \lambda_i^{\mathrm{del}} \geq 0,\;\; \psi_i \geq 0 \notag \\
& && \lambda_i^{\mathrm{del}} \eta_{ij} \leq \lambda_i p_{ij} \;\;\;\; i,j \in \{1, \ldots, N\} \notag
\end{align}
\end{quote}

The two objectives are indeed aligned, i.e., minimizing the second objective will minimize the first as well -- this can easily be shown by applying Lemma \ref{lemma:3} and \ref{lemma:4}. The last constraint in the MRP ensures that the customer-driven Jackson network always has non-negative rates. Remarkably, the MRP can be solved as two decoupled linear optimization problems with the same form as in \cite{SLS-MP-MS-ea:13} (which uses a deterministic, fluidic model). This result, stated in Theorem \ref{th:1}, constitutes the main contribution of this section. Its proof relies on several supporting lemmas which can be found in the appendix. 

\begin{theorem}[Solution to MRP] \label{th:1}
Consider the following two decoupled linear optimization problems
\begin{align}
&\underset{\beta_{ij}}{\emph{minimize}} && \sum_{i,j} T_{ij}\, \beta_{ij}  \label{problembeta}\\
&\emph{subject to} && \sum_{j \neq i} (\beta_{ij} - \beta_{ji}) = \lambda_i - \sum_{j \neq i}\lambda_j\, p_{ji} \notag\\
& && 0 \leq \beta_{ij} \leq \lambda_i\, p_{ij} \notag
\end{align}
\vspace{-1em}
\begin{align}
&\underset{\alpha_{ij}}{\emph{minimize}} && \sum_{i,j} T_{ij}\, \alpha_{ij}  \label{problemalpha} \\
&\emph{subject to} && \sum_{j \neq i} (\alpha_{ij} - \alpha_{ji}) = -\lambda_i + \sum_{j \neq i}\lambda_j\, p_{ji} \notag\\
& && 0 \leq \alpha_{ij} \notag
\end{align}
These problems are always feasible. Let $\beta^*_{ij}$ and $\alpha^*_{ij}$ be optimal solutions to problems \eqref{problembeta} and \eqref{problemalpha} respectively. By making the following substitutions
\begin{align*}
\lambda_i^{\mathrm{del}} &= \sum_{j \neq i} \beta_{ij}^*, \\
\psi_i &= \sum_{j \neq i} \alpha^*_{ij}, \\
\eta_{ij} &= 
\begin{cases}
	0 & \text{if } i = j, \\
	\beta^*_{ij}/\lambda_i^{\mathrm{del}} & \text{if } \lambda_i^{\mathrm{del}} > 0, i \neq j, \\
	1/(N-1) & \text{otherwise},
\end{cases} \\
\xi_{ij} &= 
\begin{cases}
	0 & \text{if } i = j, \\
	\alpha^*_{ij}/\psi_i & \text{if } \psi_i > 0, i \neq j, \\
	1/(N-1) & \text{otherwise},
\end{cases}
\end{align*}
one obtains the optimal solution to the MRP. 
\end{theorem}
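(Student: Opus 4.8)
The plan is to introduce the \emph{flow} variables $\beta_{ij} := \lambda_i^{\mathrm{del}}\,\eta_{ij}$ (real customers delegated from $i$ to $j$ in System~2) and $\alpha_{ij} := \psi_i\,\xi_{ij}$ (virtual/rebalancing customers from $i$ to $j$), and to show that under this change of variables the MRP decouples exactly into the two linear programs \eqref{problembeta}--\eqref{problemalpha}. I would first observe that the constraints $\sum_j\eta_{ij}=1$ and $\sum_j\xi_{ij}=1$ together with the nonnegativity requirements are equivalent to $\lambda_i^{\mathrm{del}}=\sum_{j\neq i}\beta_{ij}$, $\psi_i=\sum_{j\neq i}\alpha_{ij}$, $\beta_{ij},\alpha_{ij}\ge 0$, and that the MRP's last constraint $\lambda_i^{\mathrm{del}}\eta_{ij}\le\lambda_i p_{ij}$ is precisely $\beta_{ij}\le\lambda_i p_{ij}$. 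The recovery of $\eta_{ij},\xi_{ij}$ from an LP optimum is then the stated substitution, with the value $1/(N-1)$ used only in the degenerate cases $\lambda_i^{\mathrm{del}}=0$ or $\psi_i=0$, where the corresponding flows vanish and the choice is immaterial.

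The crux is to translate the balance constraint \eqref{problem1}, i.e. $\gamma_i^{(k)}=\gamma_j^{(k)}$, into a linear flow-conservation constraint. Since $\gamma_i^{(k)}=\pi_i^{(k)}/\lambda_i^{(k)}$ and $\pi^{(k)}$ is the (unique up to scale) solution of the traffic equations \eqref{eq:traffic} for the routing chain $p^{(k)}$, requiring $\gamma_i^{(k)}$ to be constant in $i$ is equivalent to $\pi_i^{(k)}\propto\lambda_i^{(k)}$; normalizing $\pi_i^{(k)}=\lambda_i^{(k)}$ and substituting into \eqref{eq:traffic} reduces the station-level traffic equation to the statement that the actual inflow rate equals the actual outflow rate at each station, $\sum_{l}\lambda_i^{(k)}p_{il}^{(k)}=\sum_l\lambda_l^{(k)}p_{li}^{(k)}$. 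I would need to verify that, for the SS/IS network of Section~\ref{sec:jmodel}, the throughput at a station node indeed equals the sum of the throughputs on the incident road (IS) nodes, so that this station-level conservation is the correct reduction; this bookkeeping over the SS/IS structure is the main obstacle and is presumably where the supporting lemmas enter.

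With that reduction in hand the computation is routine. For System~1 the per-station flows are $\lambda_i^{(1)}p_{ij}^{(1)}=\lambda_i p_{ij}-\beta_{ij}$ by \eqref{eq:prob2}, and conservation collapses (using $\sum_j p_{ij}=1$) to exactly the equality constraint of \eqref{problembeta}. For System~2 the per-station flow is $\alpha_{ij}+\beta_{ij}$, and its conservation equation, after subtracting the System~1 constraint, yields exactly the equality constraint of \eqref{problemalpha}. The two MRP objectives rewrite as $\sum_{ij}T_{ij}\alpha_{ij}$ and $\sum_{ij}T_{ij}(\alpha_{ij}+\beta_{ij})$; because the resulting programs involve disjoint variable sets and independent constraints, minimizing \eqref{problembeta} and \eqref{problemalpha} separately simultaneously minimizes both objectives, which both proves their alignment and identifies the joint optimum with the pair of LP optima.

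Finally, for feasibility I would exhibit explicit feasible points. Taking $\beta_{ij}=\lambda_i p_{ij}$ satisfies the equality and bound constraints of \eqref{problembeta}, since $\sum_{j\neq i}(\lambda_i p_{ij}-\lambda_j p_{ji})=\lambda_i-\sum_{j\neq i}\lambda_j p_{ji}$. For \eqref{problemalpha}, summing its right-hand side over $i$ gives $\sum_i(-\lambda_i+\sum_{j}\lambda_j p_{ji})=0$, so the demands balance; since $\{p_{ij}\}$ is irreducible the complete directed graph on the stations is strongly connected, and a nonnegative circulation realizing these balanced supplies and demands exists. This establishes feasibility of both programs and completes the argument.
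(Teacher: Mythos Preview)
Your proposal is correct and follows essentially the same route as the paper. You introduce the flow variables $\beta_{ij}=\lambda_i^{\mathrm{del}}\eta_{ij}$ and $\alpha_{ij}=\psi_i\xi_{ij}$, reduce the balancing constraint $\gamma_i^{(k)}=\gamma_j^{(k)}$ to station-level flow conservation via uniqueness of the traffic-equation solution, decouple the two systems, and handle feasibility---exactly the skeleton of the paper's argument, with the SS/IS ``folding'' you flag being precisely the content of the supporting lemmas. The only cosmetic difference is that the paper packages the equivalence step by rewriting the folded balance equation as $Z\gamma^{(k)}=\gamma^{(k)}$ for a row-stochastic matrix $Z$ and invoking Perron--Frobenius directly on $\gamma^{(k)}$, whereas you argue via $\pi^{(k)}\propto\lambda^{(k)}$; these are the same uniqueness statement viewed from opposite ends.
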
 
\begin{proof}
Problem \eqref{problemalpha} is an uncapacitated minimum cost flow problem and is always feasible. The upper bound constraint in Problem \eqref{problembeta} constitutes a standard condition for the existence of a feasible solution in a minimum cost flow problem \cite[p. 220]{SLS-MP-MS-ea:13, BK-JV:07}. The main task of the proof is showing that the constraints $\gamma^{(k)}_i = \gamma^{(k)}_j$ are equivalent to the constraints in \eqref{problembeta} and \eqref{problemalpha}, which is shown in Lemmas \ref{lemma:3} and \ref{lemma:4} using the Perron-Frobenius theorem \cite{CDM:01}.
\end{proof}
This result allows us to compute the open-loop control very efficiently and can be applied to very large systems comprising hundreds of stations. We apply this technique in the next section to compute the availability of vehicles at each station and in Section \ref{sec:fleetsizing} to the problem of ``sizing" a MoD system (i.e., determining the optimal fleet size and number of drivers).

\subsection{Availability of vehicles for real passengers} \label{sec:realcustomers}
In general, the availability of vehicles at each station in the customer-driven system is different from the taxi system. The approach in the previous section calculates the availability of the two systems separately, but the availability of vehicles in the taxi system applies not only to real customers, but to virtual customers as well. To calculate the availability for all (real) passengers, we must consider both systems concurrently. First, we note that the total throughput of both real and virtual customers for both networks is given by 
\begin{equation*}
\Lambda^{\text{tot}}_i(m_v, m_d) = \Lambda^{(1)}_i(m_v-m_d) + \Lambda^{(2)}_i(m_d).
\end{equation*}
The throughput of only real passengers is given by
\begin{equation*}
\Lambda^{\text{pass}}_i(m_v,m_d) = \Lambda^{(1)}_i(m_v-m_d) + \frac{\lambda_i^{\mathrm{del}}}{\lambda_i^{\mathrm{del}} + \psi_i} \Lambda^{(2)}_i(m_d),
\end{equation*}
where the second term on the right hand side reflects the fraction of real passengers in the taxi network. Thus, the vehicle availability for real passengers is given by
\begin{equation*}
A^{\text{pass}}_i(m_v, m_d) = \frac{\Lambda^{\text{pass}}_i(m_v, m_d)}{\lambda_i}.
\end{equation*}
With some algebraic manipulations, $A^{\text{pass}}_i(m_v,m_d)$ can be rewritten as
\begin{equation}
A^{\text{pass}}_i(m_v, m_d) = A^{(1)}_i(m_v-m_d) q_i + A^{(2)}_i(m_d) (1-q_i).
\label{eq:realA}
\end{equation}
Since $q_i$ is in general not the same for all $i$, the availability of vehicles for real passengers will \emph{not} be the same for every station. 
\begin{figure}[htbp!]
	\centering \hspace{-1em}
		\subfigure[$m_v / m_d = 3$]{\label{fig:realA3} \includegraphics[width=0.165\textwidth]{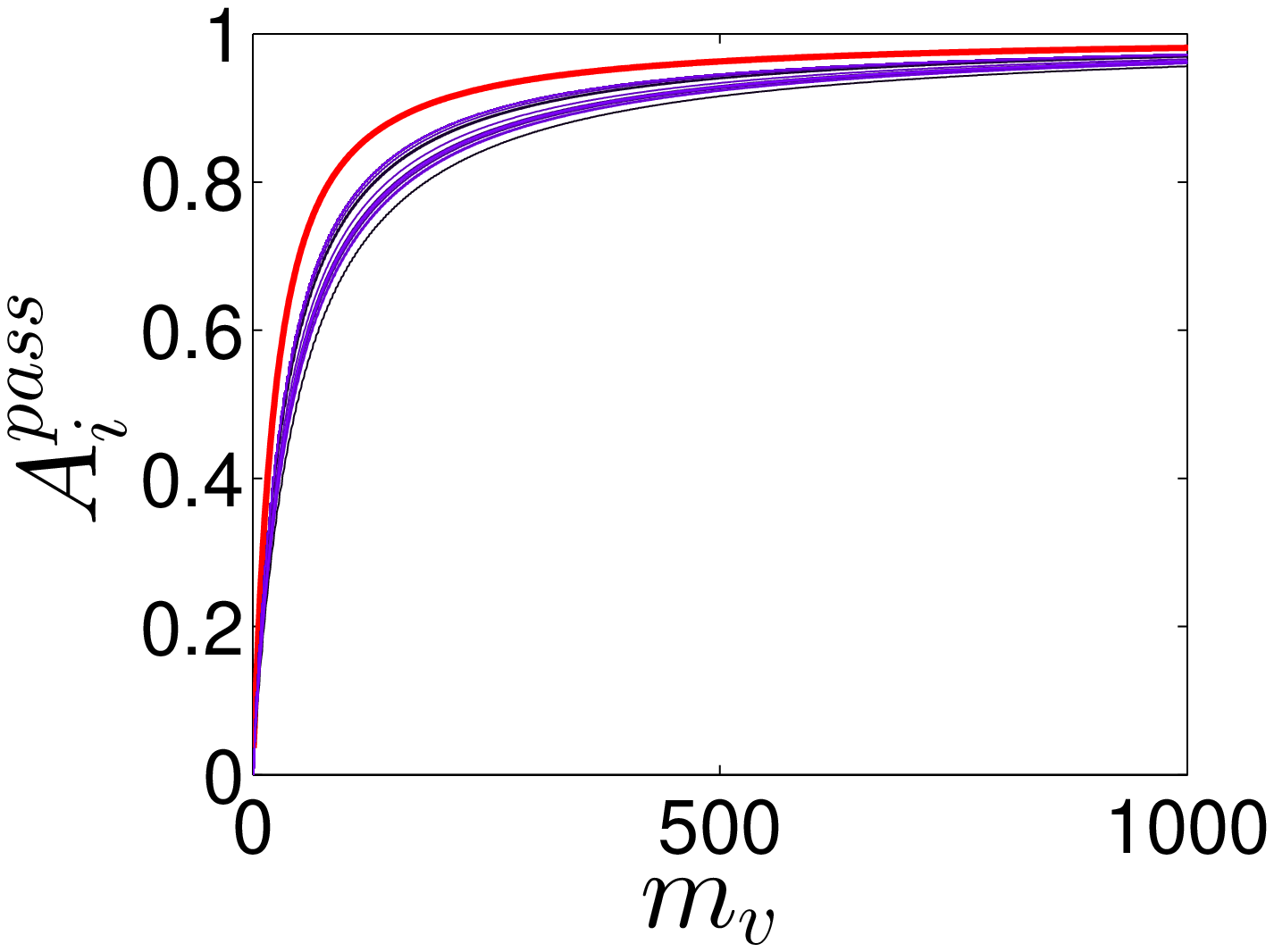}} \hspace{-1.2em}
		\subfigure[$m_v / m_d = 5$]{\label{fig:realA5} \includegraphics[width=0.165\textwidth]{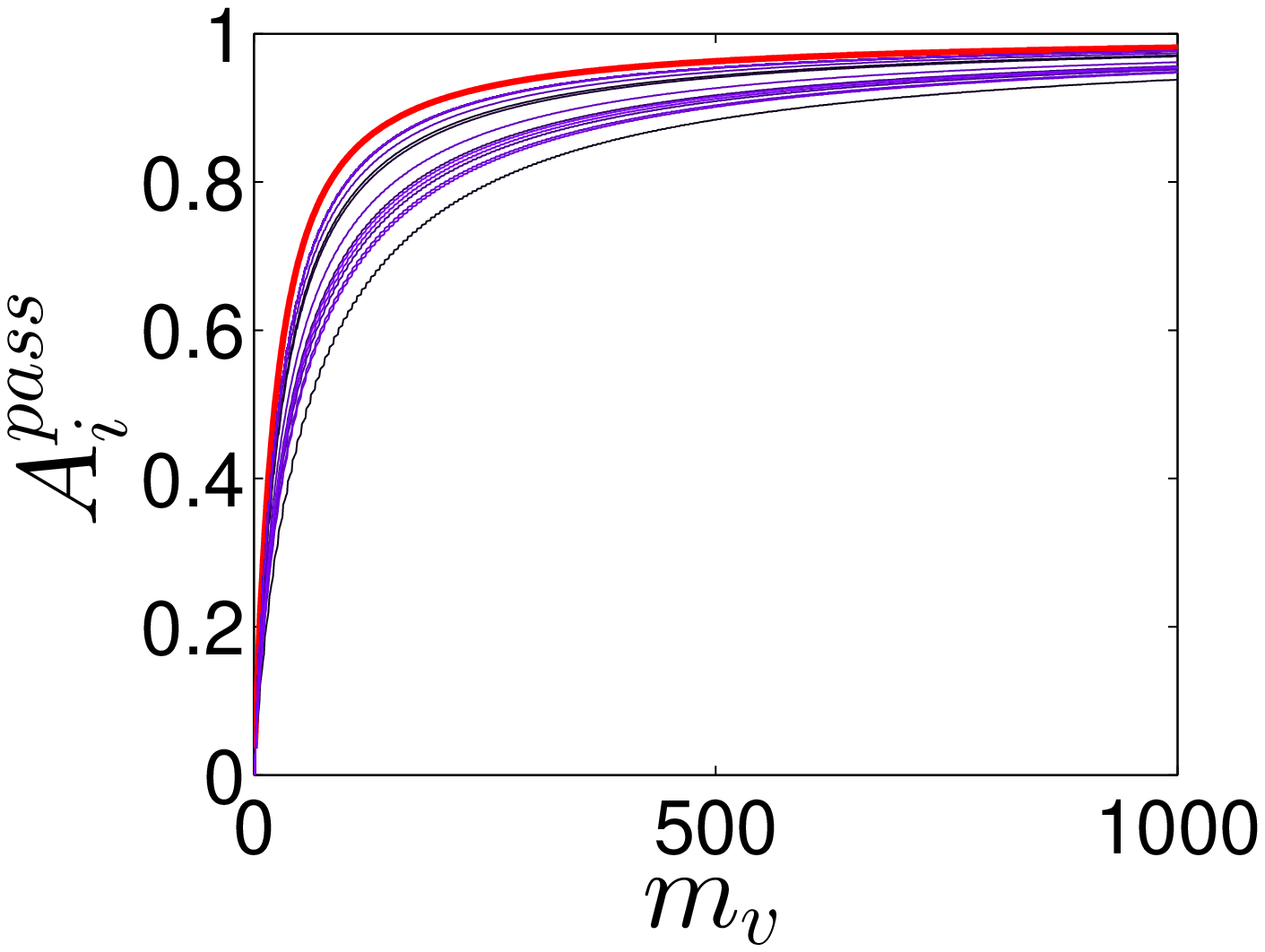}} \hspace{-0.9em}
		\subfigure[$m_v / m_d = 10$]{\label{fig:realA10}\includegraphics[width=0.165\textwidth]{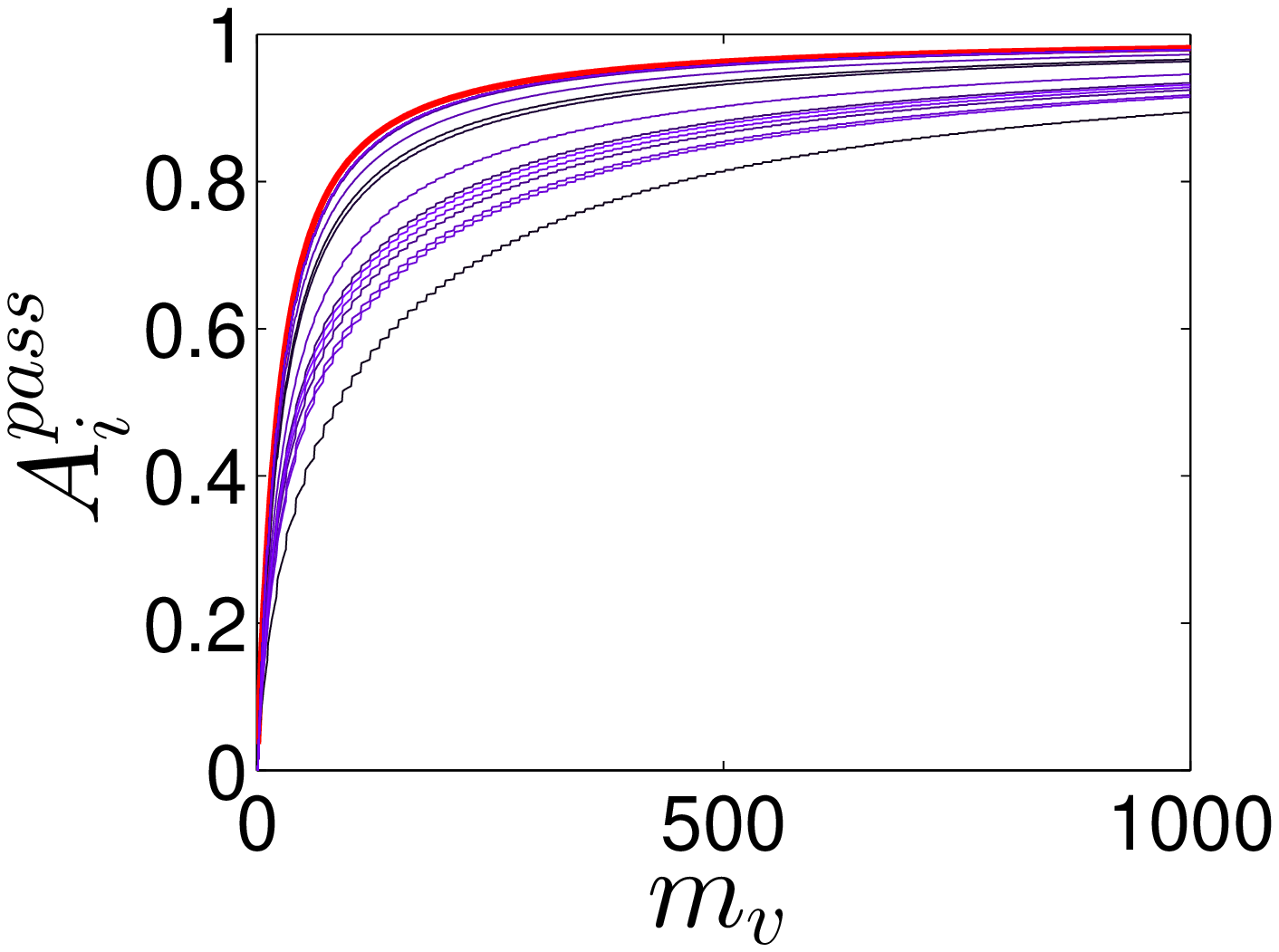}}			
	\caption{Overall vehicle availability for passengers for a randomly generated system with 20 stations. The red line shows the availability if there were as many drivers as vehicles (or an autonomous MoD system). \ref{fig:realA3} shows a vehicle-to-driver ratio of 3, \ref{fig:realA5} shows a vehicle-to-driver ratio of 5, and \ref{fig:realA10} shows a vehicle-to-driver ratio of 10.}
\vspace{-1.5em}
\label{fig:realA}
\end{figure}
Fig. \ref{fig:realA} shows that the rebalancing technique described in the previous section will produce unbalanced vehicle availabilities for real passengers. Furthermore, the degree of system imbalance grows with the vehicle-to-driver ratio, which intuitively makes sense since there are fewer drivers to rebalance the system when the vehicle-to-driver ratio is high. However, it is important to note that even though the availabilities at each station are not the same, as $m_v \rightarrow \infty$ and $m_d \rightarrow \infty$, the availabilities approach one for \emph{all} stations. 

The red line in Fig. \ref{fig:realA} shows the availability of the system if there were $m_v$ drivers and $m_v$ vehicles (or equivalently a taxi system or an autonomous MoD system). It is clear that the autonomous MoD system yields better performance both in terms of throughput (high availability) and fairness (same availability at all stations) due to the ability of every vehicle to perform rebalancing trips. This result presents a strong case for the advantages of autonomous MoD systems over current human-driven MoD systems in operation. 


To validate these results, simulations are performed using a small 5-station system positioned in a $5 \times 5$ grid with vehicles traveling at a constant speed of 0.2 units per time step. In the simulation, customers arrive at each station according to a Poisson process with rate $\lambda_i$ and report their desired destinations. Based on the destinations, customers are delegated by a Bernoulli random variable to either drive themselves to their destination or be driven to their destination by a driver. If in either case a vehicle and/or driver is unavailable, the customer leaves the system. At each time step, after the customers are delegated, rebalancing is performed by generating ``virtual customers'' according to a Poisson process with rate $\psi_i$ and assigning the virtual customers to available drivers at each station. The availability for customers at each station is computed by dividing the number of successfully serviced customers by the total number of customer arrivals. Simulations are performed for 5 system sizes (number of vehicles and drivers) keeping the vehicle-to-driver ratio fixed. In order to capture the steady-state behavior of the system, each simulation is performed for 50,000 time steps. Fig. \ref{fig:lossSim} shows the simulated vehicle availabilities, averaged over 50 simulation runs, compared to the theoretical results computed using \eqref{eq:realA}. The simulations show remarkable consistency between the theoretical availability probabilities and experimental results, and that availability for real passengers indeed differ across stations under the control policy presented in Section \ref{sec:reb1}.
\begin{figure}[htbp]
	\centering 
	\includegraphics[width=0.35\textwidth]{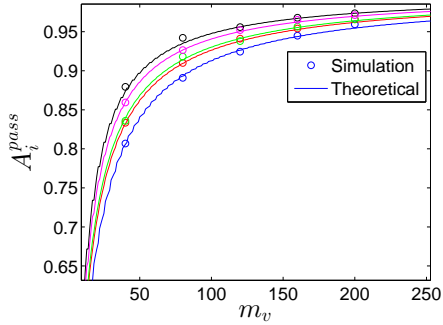}	 \vspace{-0.5em}	
	\caption{Validation of queueing model showing availability for real customers. A 5-station system is simulated with 40, 80, 120, 160, and 200 vehicles and a vehicle-to-driver ratio of 4. The circles represent mean availabilities over 50 simulation runs for each station. Each color represents a different station. The average standard deviation for the simulation results is 0.0189.}
\vspace{-1.7em}
\label{fig:lossSim}
\end{figure}

\subsection{Exact MoD rebalancing} \label{sec:nonlinear}
It is clear that applying the rebalancing constraints separately for the two networks as done in \eqref{problem1} does not yield a balanced system in terms of vehicle availability for all customers. Indeed, the constraints needed to balance availability for the passengers is
\begin{equation}
A^{\text{pass}}_i(m_v, m_d) = A^{\text{pass}}_j(m_v, m_d) \;\;\;\; \forall i,j \in \{1,...,N\}.
\label{eq:rebcon}
\end{equation} 
Note that this constraint is dependent on the number of vehicles and the number of drivers in the system, and thus cannot be reduced to a linear constraint in the decision variables. Taking into account the modified constraints, we reformulate our problem to the following:
\begin{quote}{\bf \mpmargin{Modified}{Stretch your imagination and come up with something better} MoD Rebalancing Problem (MMRP)}: Given a MoD system with $N$ stations, $m_v$ vehicles, and $m_d$ drivers modeled as two closed Jackson networks, solve
\begin{align}
&\underset{\lambda_i^{\mathrm{del}}, \psi_i, \eta_{ij}, \xi_{ij}}{\text{minimize}} && \sum_{i,j} T_{ij}\, \xi_{ij}\, \psi_i - c \sum_i A^{(2)}_i(m_v-m_d) \label{problem2} \\
&\text{subject to} && \gamma^{(1)}_i = \gamma^{(1)}_j  \notag  \\
& && A^{\text{pass}}_i(m_v, m_d) = A^{\text{pass}}_j(m_v, m_d) \notag \\
& && \sum_j \eta_{ij} = 1, \;\;\;\;\sum_j \xi_{ij} = 1 \notag \\
& && \eta_{ij} \geq 0,\; \xi_{ij} \geq 0,\; \lambda_i^{\mathrm{del}} \geq 0,\; \psi_i \geq 0 \notag \\
& && \lambda_i^{\mathrm{del}} \eta_{ij} \leq \lambda_i p_{ij} \;\;\;\; i,j \in \{1, \ldots, N\}. \notag
\end{align}
\end{quote}
The objective function now trades off two objectives that are not always aligned -- minimizing the number of rebalancing trips while maximizing the overall availability (note that the first constraint balances and maximizes the availability of the customer-driven system, so to maximize overall availability, we only need to maximize the availabilities in the taxi system). A weighting factor $c$ is used in this trade-off. The constraint $\gamma^{(1)}_i = \gamma^{(1)}_j$ is used in conjunction with \eqref{eq:rebcon} to ensure the availability of the customer-driven system remains balanced. The strategy is to use the taxi system to enforce the availability constraint for real customers with the intuition that the system operator has full control over the rebalancing of the taxi system while the rebalancing of the customer-driven system depends on the arrival process of the customers, which is subject to large stochastic fluctuations. If the customer-driven system becomes unbalanced, empty vehicles will accumulate at some stations for extended periods of time, decreasing the effective number of vehicles in the system (see Section \ref{sec:realTime}). 

The modified availability constraint \eqref{eq:rebcon} is nonlinear and involves solving for $A^{(2)}_i$ using MVA at each iteration ($A^{(1)}_i$ is also needed, but only needs to be computed once). For systems of reasonably small size ($\sim20$ stations and $\sim1000$ vehicles), MVA can be carried out quickly ($<$ 1 sec). For larger networks, an approximate MVA technique exists which involves solving a set of nonlinear equations rather than iterating through all values of $m$ \cite{MR-SSL:80}. The MMRP can be solved using nonlinear optimization techniques for a given number of vehicles and drivers. We let $A^*$ represent the balanced availability $A^{\text{pass}}_i$ obtained by solving the MMRP. 
\vspace{-1.5em}
\begin{figure}[htbp]
\centering
\subfigure[$c = 1, A^* = 0.5327$]{\label{fig:nonlinC1} \includegraphics[width=0.23\textwidth]{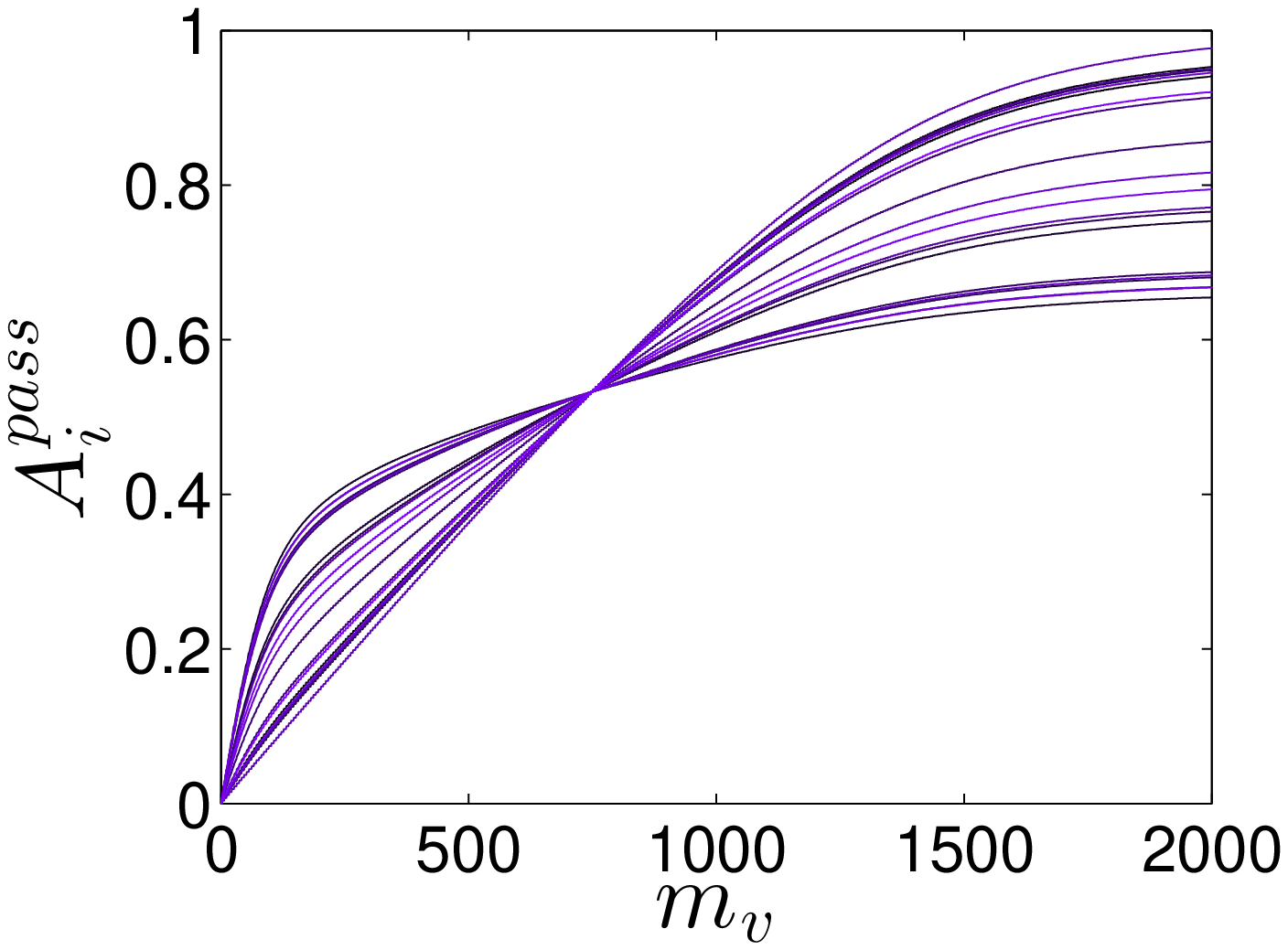}} \hspace{-1em}
\subfigure[$c = 10, A^* = 0.8988$]{\label{fig:nonlinC10} \includegraphics[width=0.23\textwidth]{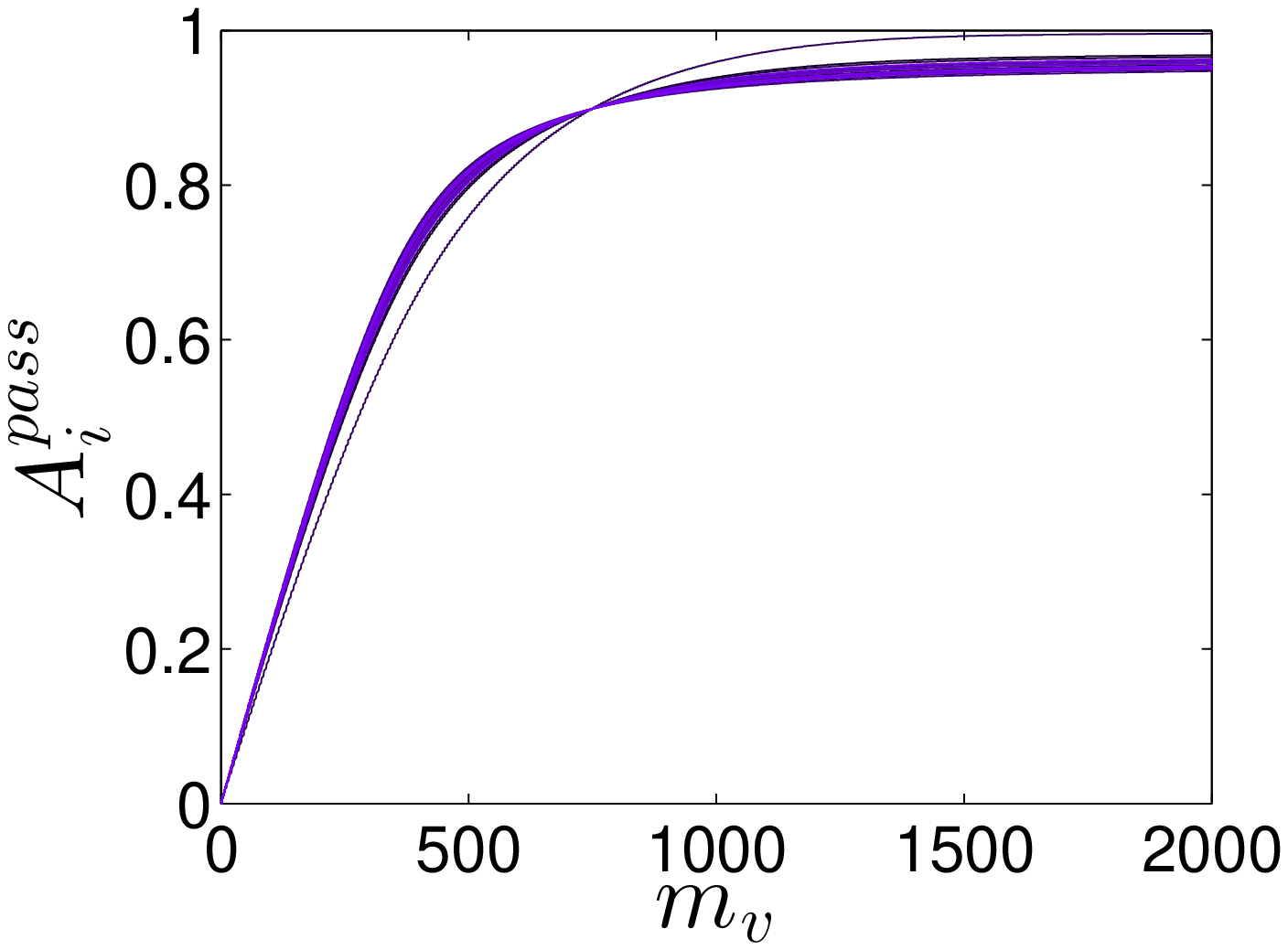}} \\
\vspace{-1em}
\subfigure[Pareto optimal curve of MMRP ($c = 1,2,3,4,5,6,10,15,20,50$)]{\label{fig:nonlinPareto} \includegraphics[width=0.23\textwidth]{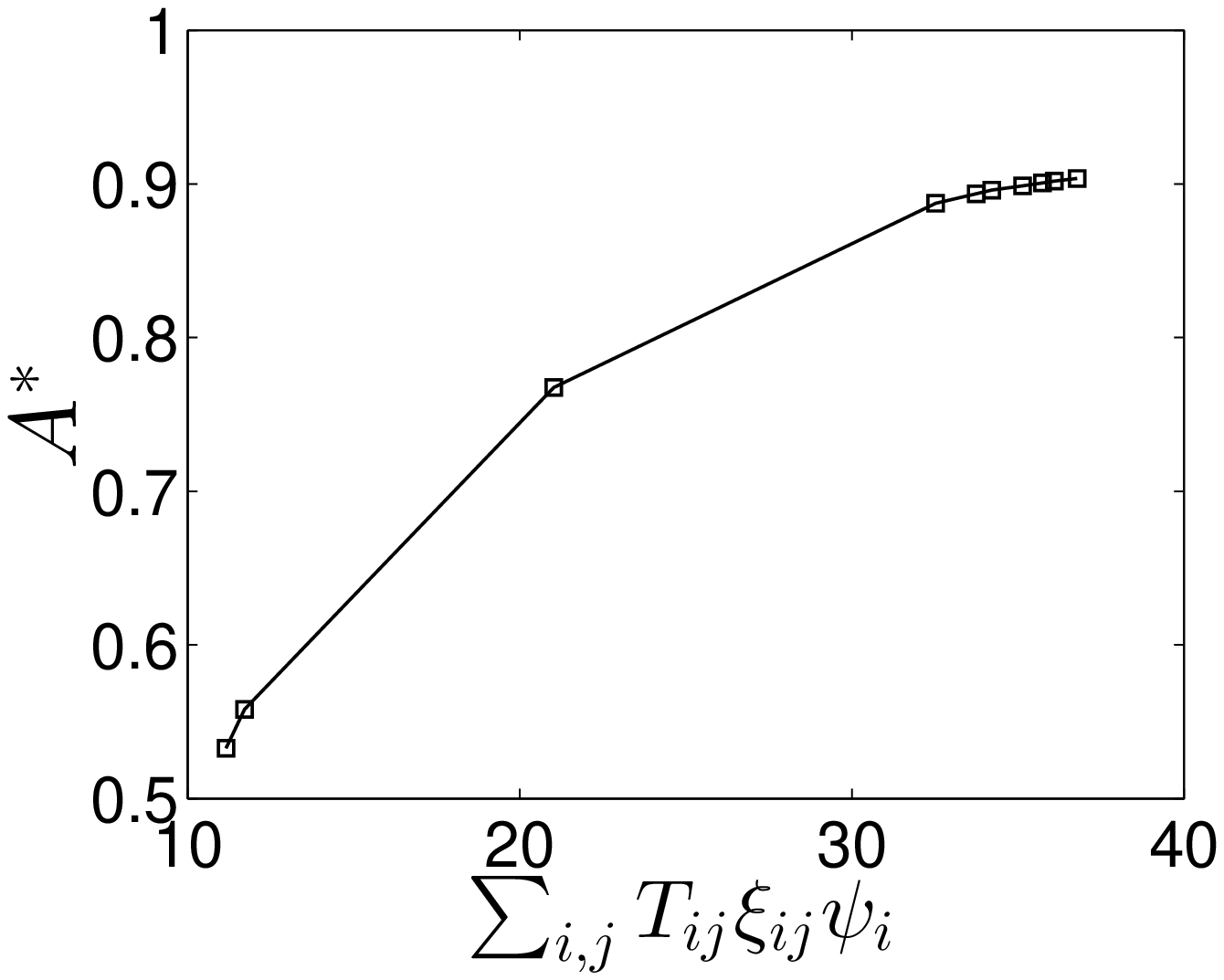}}
\hspace{-1em}
\subfigure[Linear solution. ]{\label{fig:linfrac5NY} \includegraphics[width=0.23\textwidth]{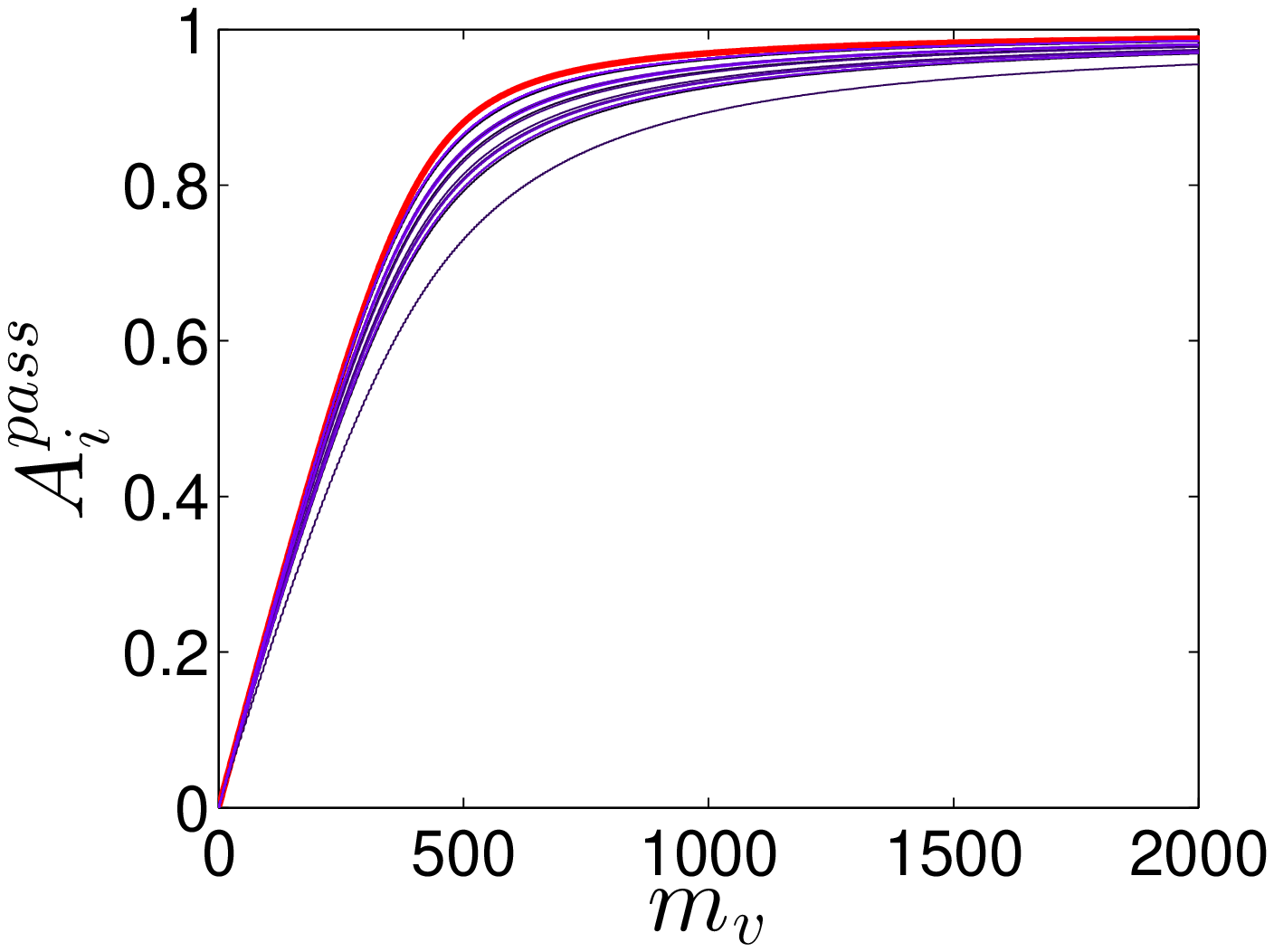}}
\caption{Nonlinear optimization results for a 20 station system based on Lower Manhattan taxi trip data. \ref{fig:nonlinC1} shows the optimized availability curves for $c = 1$. \ref{fig:nonlinC10} shows the optimized availability curves for $c = 10$. \ref{fig:nonlinPareto} shows the Pareto optimal curve obtained by increasing $c$ from 1 to 50. The x-axis can be interpreted as the average number of rebalancing vehicles on the road. \ref{fig:linfrac5NY} shows the linear optimization results for comparison.}
\vspace{-0.5em}
\label{fig:nonlinear}
\end{figure}

To demonstrate this technique on a realistic system, key system parameters (arrival rates, routing probabilities, and travel times) were extracted from a portion of a data set of New York City taxi trips\footnote{Courtesy of the New York City Taxi \& Limousine Commission.}. Specifically, a 20-station system was created using taxi trips within Lower Manhattan (south of 14th St.) between 10 and 11am on March 1, 2012. The MMRP is solved for this system with 750 vehicles and 150 drivers ($m_v/m_d=5$). Fig. \ref{fig:nonlinear} shows the resulting availability curves and the trade-off between rebalancing rate and system performance. 

Fig. \ref{fig:nonlinear} shows that as the weighting factor $c$ is increased, vehicle availability increases at a cost of an increased number of rebalancing trips up to a point and levels off (in this case around 90\%). This result compares favorably with the linear solution (\ref{fig:linfrac5NY}), where at $m_v = 750$, the availabilities range from 0.84 to 0.94. 
In general, the linear optimization technique appears suitable for computing a first approximation of key design parameters of the system, and the nonlinear technique can be used to further refine the solution.
Finally, compared to an autonomous MoD system with the same number of vehicles (red line in Fig. \ref{fig:linfrac5NY}), the overall availability is 5\% lower (90\% vs. 95\%). This further shows that autonomous MoD systems would achieve higher levels of performance compared to MoD systems. 

\subsection{Application to system sizing} \label{sec:fleetsizing}
Though the linear programming approach (Section \ref{sec:reb1}) does not yield identical availabilities across all stations, it is nonetheless useful for applications such as fleet sizing due to its scalability and efficiency. In this section we provide a simplified example of how to use the MRP approach to gain insight into the optimal vehicle-to-driver ratio ($m_v/m_d$) of a MoD system. The idea is to find the optimal number of vehicles and drivers that would minimize total cost (or maximize profit) while maintaining an acceptable quality of service. For this simple example, the total cost (normalized by the cost of a vehicle) is
\begin{equation}
\vspace{-0.1em}
c_{\mathrm{total}} = m_v + c_r m_d,
\label{eq:cost}
\vspace{-0.1em}
\end{equation}
where $c_r$ is the cost ratio between a vehicle and a driver. It is reasonable to assume that the cost of a driver is greater than the cost of a vehicle, so $c_r >=1$. Three MoD systems are generated using portions of the New York City taxi data: 1) Lower Manhattan (A1), 2) Midtown Manhattan (A2), and 3) Upper Manhattan (A3). Taxi trips within each region are aggregated and clustered into 20 stations, and the system parameters ($\lambda_i$, $p_{ij}$, and $T_{ij}$) are estimated. Different travel patterns in the three systems allow us to generalize our insights about the optimal $m_v/m_d$ required to minimize cost. For each system with a fixed $m_v/m_d$, the MRP is solved and the number of vehicles and drivers needed are found such that the lowest availability across all the stations is greater than the availability threshold. Three availability thresholds are investigated (85\%, 90\%, and 95\%). Fig. \ref{fig:fleet1} shows the total cost as it varies with the vehicle-to-driver ratio and with $c_r$ for Lower Manhattan with 90\% availability threshold. The optimal vehicle-to-driver ratio is the minimum point of each line in \ref{fig:fleet1}. Fig. \ref{fig:fleet2} shows the optimal vehicle-to-driver ratios plotted against the cost ratio $c_r$ for all three Manhattan suburbs and all three availability thresholds. 
\begin{figure}[htbp]
\vspace{-1.5em}
\centering
\subfigure[Lower Manhattan (A1)-90\%]{\label{fig:fleet1} \includegraphics[width=0.23\textwidth]{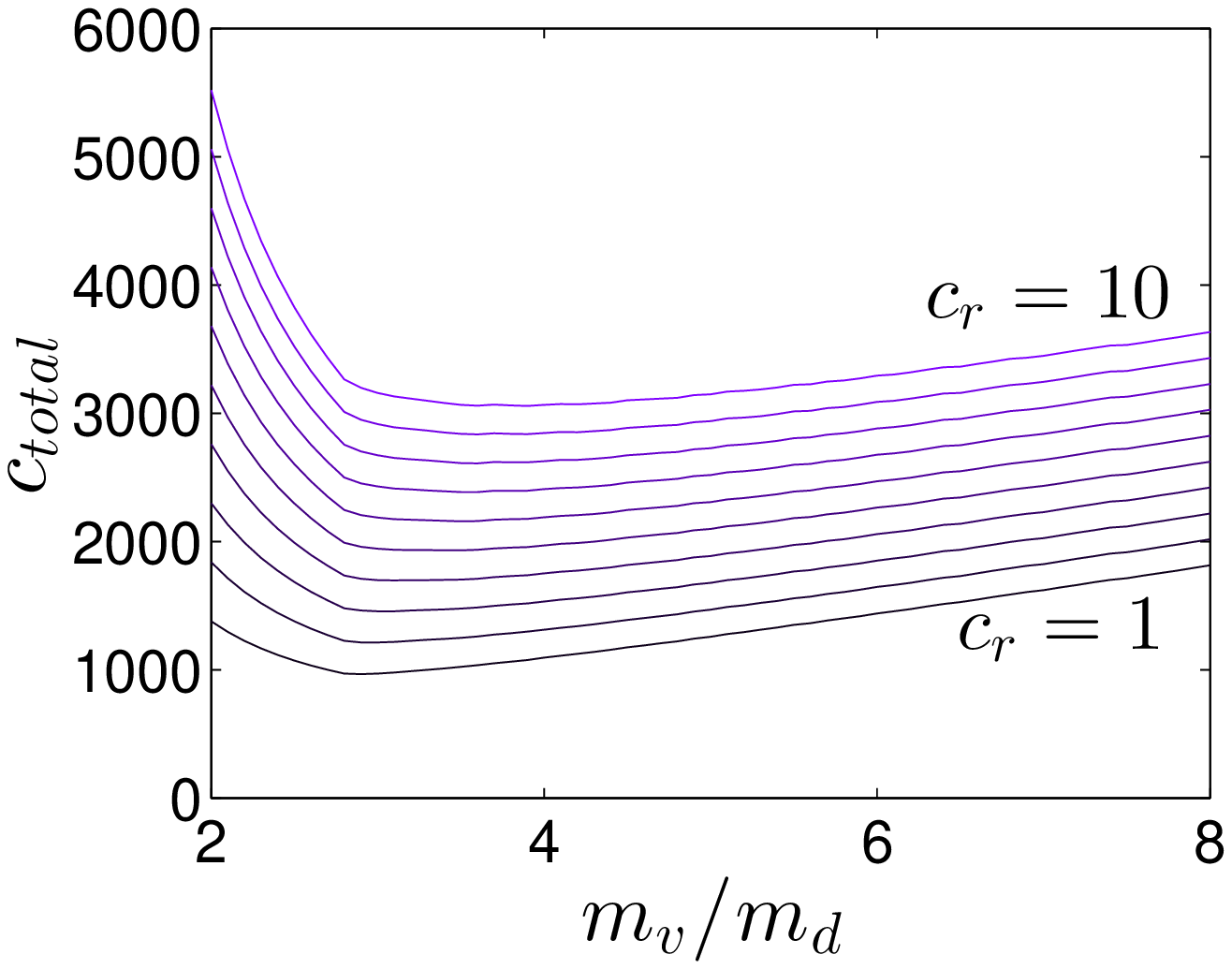}} \hspace{-1em}
\subfigure[Optimal vehicle-to-driver ratio]{\label{fig:fleet2} \includegraphics[width=0.24\textwidth]{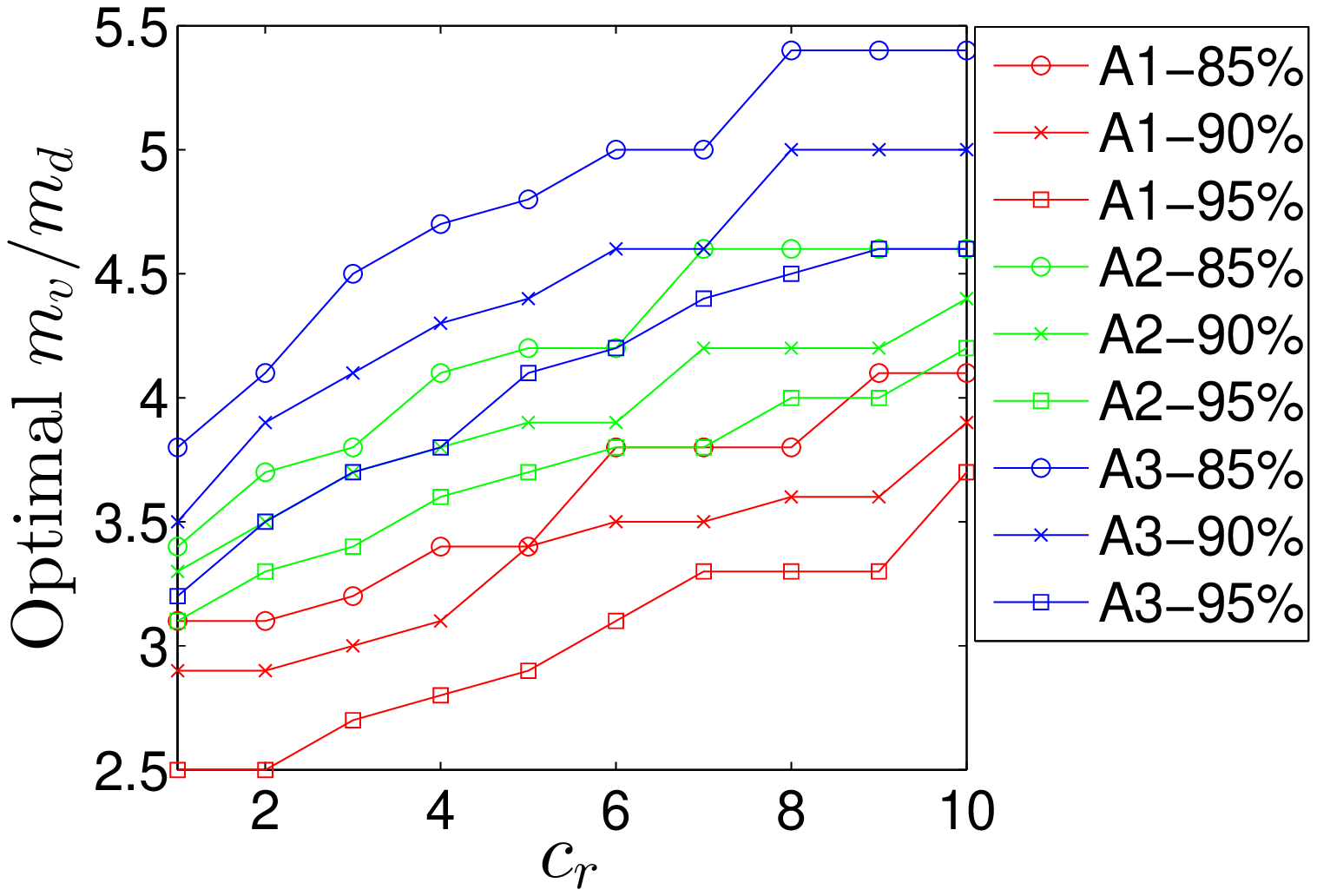}}
\caption{\ref{fig:fleet1} shows the total cost as a function of the vehicle-to-driver ratio for $c_r$ values ranging from 1 to 10. $m_v$ and $m_d$ values at each point in each curve can be solved using \eqref{eq:cost}. $m_v$ and $m_d$ satisfies the constraint that the availability at each station is greater than the threshold of 90\%. \ref{fig:fleet2} shows the optimal vehicle-to-driver ratio for the 3 suburbs of Manhattan and 3 availability thresholds (85\%, 90\%, 95\%). The curve A1-90\% in \ref{fig:fleet2} is constructed from the minimum points of each curve in \ref{fig:fleet1}.}
\vspace{-0.7em}
\label{fig:fleetSizing}
\end{figure}

A few insights can be gained from this example. First, the optimal $m_v/m_d$ ratio does not significantly increase with increasing cost ratio. 
Second, the optimal $m_v/m_d$ ratio decreases as the availability threshold is raised, consistent with the idea that a high quality of service requires more rebalancing, and thus more drivers. Third, the optimal $m_v/m_d$ ratio is clearly different for each of the Manhattan suburbs (which highlights the important system-dependent nature of this value) but stays between 3 and 5 for a wide range of cost ratios. This example shows the applicability of the queueing network approach to the design and analysis of MoD systems. Similar studies can be done with the nonlinear approach, which will yield higher predictive fidelity but at a higher computation cost. 
\vspace{-0.2em}
\section{Closed-loop Control of MoD Systems} \label{sec:realTime}
\vspace{-0.2em}
In this section we formulate a real-time closed-loop control policy by drawing inspiration from the open-loop counterparts in Section \ref{sec:rebalancing}. Our closed-loop policy relies on receding horizon optimization and  is targeted towards a practical scenario where customers would wait in line for the next available vehicle rather than leave the system. The control policy must perform two tasks: 1) rebalance vehicles throughout the network by issuing instructions to drivers, and 2) assign vehicles (with or without driver) to new customers at each station. For simplicity, as in Section \ref{sec:rebalancing}, we perform these tasks \emph{separately} by designing  a vehicle rebalancing policy and a customer-assignment policy. A vehicle rebalancing policy was introduced in \cite{MP-SLS-EF-DR:12} for autonomous MoD systems, which has been shown to be quite effective \cite{RZ-MP:14}, hence we adapt it for our system with little modification. The customer-assignment policy is trickier, and we propose a mixed-integer linear program (MILP) to select the best assignment based on the current state of the system. The proposed policy enforces the following operation scenario for the MoD system: Customers arriving at each station join a queue of ``unassigned'' customers. A system-wide optimization problem is solved to try to assign as many customers as possible while keeping the customer-driven vehicles balanced. Once a customer is assigned, he/she moves to the departure queue where he/she will depart with an empty vehicle or with a taxi. The optimization procedure is performed every time a departure queue is empty and there are unassigned customers. The notion of keeping the customer-driven vehicles balanced at each station stems from early studies we performed using simple heuristic policies, where we observed customer-driven vehicles aggregating at a small number of stations unused for long periods of time, effectively decreasing the number of vehicles in the system. This observation inspired the formulation of the MMRP (Section \ref{sec:nonlinear}) as well as the real-time policy. 

Let $n^v_{ij}$ be the number of customers traveling from station $i$ to $j$ to be assigned to drive themselves. Let $n^d_{ij}$ be the number of customers traveling from station $i$ to $j$ to be assigned to a taxi. Denote by $v^e_i$ the number of excess unassigned customer-driven vehicles at station $i$, $v^{t}_{ji}$ the number of customer-driven vehicles traveling from station $j$ to $i$, and $v^{a}_{ji}$ the number of customer-driven vehicles at station $j$ assigned to travel to station $i$ but that have not yet left the station. Assuming these quantities are known, the number of customer-driven vehicles at a future time step is $v^{+}_i = v^e_i + \sum_j (v^{a}_{ji} + v^{t}_{ji} + n^v_{ji} - n^v_{ij})$. We can define a desired vehicle distribution to be, for example, $v^{des}_i = (m_v - m_d) \lambda_i/\sum_i \lambda_i$. The assignment policy is given by solving the following optimization problem
\begin{align}
&\underset{n^d_{ij}, n^v_{ij}}{\text{minimize}} && \sum_i | v^{+}_i - v^{des}_i | - w \sum_{i,j} (n^d_{ij} + n^v_{ij}) \label{milp} \\
&\text{subject to} && n^d_{ij} + n^v_{ij} \leq c^u_{ij}  \notag  \\
& && \sum_j n^v_{ij} \leq v^e_i, \;\;\;\; \sum_j n^d_{ij} \leq d^u_i \notag \\
& && n^v_{ij} \geq 0, \; n^d_{ij} \geq 0, \; n^v_{ij} \in \mathbb{Z}, \; n^d_{ij} \in \mathbb{Z}, \notag
\end{align}
where $c^u_{ij}$ is the number of unassigned customers traveling from $i$ to $j$, $d^u_i$ is the number of unassigned drivers at station $i$, and $w$ is a weighting factor. The objective function trades off the relative importance of system balance and customer wait times (increasing $w$ would allow the system to assign more customers and reduce wait times). The constraints ensure that the assignment policy is feasible (there are enough vehicles, drivers, and customers). Problem \eqref{milp} is formulated as a MILP and solved using the IBM CPLEX solver \cite{CPLEX}. 

To assess the performance and stability of the real-time policy, simulations were performed using a 20-station system based on the travel patterns of Lower Manhattan (as in Section \ref{sec:fleetsizing}). The simulations were performed with 728 vehicles and 243 drivers ($m_v/m_d = 3$), which is the minimum size to reach 90\% availability across all stations, according to the analysis in Section \ref{sec:fleetsizing}. Each simulation was performed for 3 hours with a time step of 2 seconds. To reduce the effects of initial conditions, data was only collected for the final 2 hours. Twenty simulations were performed and the customer wait times at each time step are collected. Fig. \ref{fig:waittimes} shows the average customer wait times ($\pm 1$ standard deviation) for the station with the longest wait times. It is interesting to note that the station with the longest wait times in simulation is also the station with the lowest availability in the queueing analysis. From Fig. \ref{fig:waittimes} we see that the real-time policy yields a stable system (in terms of customer wait times) and that 90\% availability corresponds in this case to a reasonable average wait time of 7 minutes. 
\begin{figure}[htbp]
	\centering \vspace{-1em}
	\includegraphics[width=0.4\textwidth]{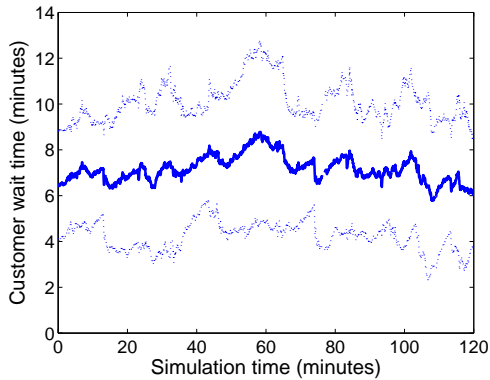}		
	\caption{Average customer wait time for the ``worst'' station. Data averaged over 20 simulations. Dotted lines indicate $\pm 1$ standard deviation.}
\vspace{-1em}
\label{fig:waittimes}
\end{figure}

In our simulations, the driver assignment problem \eqref{milp} with 820 variables was solved in 5ms on average. Since the constraints consist mostly of bounding hyperplanes, the feasible set is easy to compute and the problem should scale well to large systems, especially since it only needs to be solved once every few minutes. The results in this section showed that high availability as predicted by the open-loop control correspond well to real system performance under closed-loop control. 
\section{Conclusions and Future Work} \label{sec:conc}
In this paper we presented a queueing network model of a MoD system and developed two open-loop control approaches useful for  design tasks such as system sizing. We applied such approaches to a system sizing example for three Manhattan neighborhoods, which showed that the optimal  vehicle-to-driver ratio  is between  3 and 5. Drawing insights from these techniques, we developed a closed-loop real-time control policy  and demonstrated its effectiveness on real traffic data.

This work paves the way for several important extensions. First, we plan to include other methods of rebalancing drivers such as allowing them to use public transit or to shuttle multiple other drivers to stations with excess unused vehicles.  Second, we would like to study the effects of time-varying demand on the system, and how this impacts the amount of rebalancing that is required (for example, a small team of drivers may be able to balance the entire fleet at night or during off-peak hours). Third, it is of interest to include congestion effects in our model. A possible strategy is to modify the IS nodes by considering a finite number of servers, representing the capacity of the road. Fourth, we plan to test our strategies on microscopic and mesoscopic models of transportation networks.  Finally, we would like to incorporate the effects of dynamic pricing incentives for customers on the amount of rebalancing that is required. 

\section*{Acknowledgments} The authors would like to acknowledge insightful discussions on this topic with S. Beiker, E. Frazzoli, D. Rus, M. Schwager, and S.  Smith.
\bibliographystyle{IEEEtran}
\bibliography{../../../bib/alias,../../../bib/main,arxivbib}
\pagebreak
\appendix
\renewcommand{\thesection}{\Alph{section}}
In this section we provide the supporting lemmas necessary for the proof of Theorem \ref{th:1}. We begin by restating two lemmas introduced in \cite{RZ-MP:14} for completeness. The first lemma allows the balance equations of the Jackson network to be solved by considering only the SS nodes.
\begin{lemma}[Folding of balance equations]
Consider either System 1 or System 2 from Section \ref{sec:jmodel}. The relative throughputs $\pi$'s for the SS nodes can be found by solving the reduced balance equations
\begin{equation}
\pi_i^{(k)} = \sum_{j \in S^{(k)}} \pi^{(k)}_j p_{ji}^{(k)} \;\;\;\; \forall i \in S^{(k)}, \, k = \{1,2\},
\end{equation}
where SS nodes are considered in isolation. The $\pi$'s for the IS nodes are then given by
\begin{equation}
\pi_i^{(k)} = \pi^{(k)}_{\text{Parent}(i)} p^{(k)}_{\text{Parent}(i)\text{Child}(i)} \;\; \forall i \in I^{(k)}, k = \{1,2\}.
\end{equation}
\end{lemma}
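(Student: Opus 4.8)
The plan is to start from the full balance equations \eqref{eq:traffic} written over all $N^2$ nodes $V = S^{(k)} \cup I^{(k)}$ of network $k$, and to eliminate the IS-node variables by exploiting the special structure of the routing matrix $r^{(k)}$. The key observation is that each IS node has in-degree one in the routing graph: it is fed only by its parent station. Thus ``folding'' the balance equations is nothing more than Gaussian elimination of the IS variables, and the two displayed formulas are respectively the eliminated equations and the back-substitution formula.

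First I would specialize \eqref{eq:traffic} to an IS node $a \in I^{(k)}$ with $\text{Parent}(a) = u$ and $\text{Child}(a) = w$. Inspecting $r^{(k)}$, the only nonzero entry of the form $r^{(k)}_{ja}$ is $r^{(k)}_{ua} = p^{(k)}_{uw}$ (the first case of the routing matrix, with source $j = u$). Hence the balance equation collapses to a single term,
\[
\pi^{(k)}_a = \pi^{(k)}_u\, p^{(k)}_{uw} = \pi^{(k)}_{\text{Parent}(a)}\, p^{(k)}_{\text{Parent}(a)\,\text{Child}(a)},
\]
which is exactly the claimed expression for the IS nodes.

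Next I would specialize \eqref{eq:traffic} to a station node $i \in S^{(k)}$. By the routing matrix, the only predecessors of $i$ are the IS nodes $a$ with $\text{Child}(a) = i$, each contributing $r^{(k)}_{ai} = 1$ (the second case). Substituting the IS relation just derived and reindexing the sum over such $a$ by their unique parent station $\ell = \text{Parent}(a)$ gives
\[
\pi^{(k)}_i = \sum_{a:\,\text{Child}(a) = i} \pi^{(k)}_a = \sum_{a:\,\text{Child}(a)=i} \pi^{(k)}_{\text{Parent}(a)}\, p^{(k)}_{\text{Parent}(a)\,i} = \sum_{\ell \neq i} \pi^{(k)}_{\ell}\, p^{(k)}_{\ell i}.
\]
Since $p^{(k)}_{ii} = 0$, the excluded term $\ell = i$ vanishes, so the sum may be extended over all $\ell \in S^{(k)}$, yielding the reduced balance equations. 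Conversely, reading the same chain of equalities in reverse shows that any solution of the reduced equations, together with the stated IS formula, satisfies the full system \eqref{eq:traffic}; this establishes the equivalence in both directions.

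I expect no serious obstacle here, as the argument is essentially elimination of the in-degree-one IS variables. The only point requiring care is the reindexing step, where I must use that $a \mapsto \text{Parent}(a)$ is a bijection between the IS nodes with child $i$ and the stations $\ell \neq i$ (a consequence of the complete-graph connectivity between stations and the absence of self-loops, $p^{(k)}_{ii} = 0$), and that this bijection carries $p^{(k)}_{\text{Parent}(a)\,\text{Child}(a)}$ to precisely $p^{(k)}_{\ell i}$.
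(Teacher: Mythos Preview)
Your argument is correct. The paper itself does not supply a proof of this lemma; it is restated from \cite{RZ-MP:14} for completeness, so there is no in-paper proof to compare against. Your elimination of the IS-node variables via their in-degree-one property, followed by back-substitution into the SS-node equations, is exactly the standard derivation one expects for this kind of bipartite routing structure, and the bijection you flag in the reindexing step is the only point needing any care.
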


\begin{lemma}\label{lemma:2}
For any rebalancing policy $\{\psi_i\}_i$ and $\{\xi_{ij}\}_{ij}$, it holds for all $i \in S^{(2)}$
\begin{enumerate}
\item $\gamma^{(2)}_i > 0$, 
\item $(\lambda_i^{\mathrm{del}} + \psi_i)\, \gamma^{(2)}_i = \sum_{j \in S^{(2)}} \gamma^{(2)}_j\, (\psi_j\, \xi_{ji} + \lambda_j^{\mathrm{del}}\, \eta_{ji})$.
\end{enumerate}
Similarly, for System 1,
\begin{enumerate}
\item $\gamma^{(1)}_i > 0$, 
\item $(\lambda_i - \lambda_i^{\mathrm{del}})\, \gamma^{(1)}_i = \sum_{j \in S^{(1)}} \gamma^{(1)}_j\, (\lambda_j\, p_{ji} - \lambda_j^{\mathrm{del}}\, \eta_{ji})$.
\end{enumerate}
\end{lemma}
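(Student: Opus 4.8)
The plan is to treat the two assertions separately and to carry out System~2 in full, since the argument for System~1 is formally identical once the appropriate substitutions are made. The identity in part~(2) is pure bookkeeping on the reduced balance equations, while the positivity in part~(1) is where the real content lies.

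For part~(2), I would start from the reduced balance equation for the station nodes supplied by the folding lemma, $\pi^{(2)}_i = \sum_{j \in S^{(2)}} \pi^{(2)}_j\, p^{(2)}_{ji}$, and rewrite every term through $\gamma^{(2)}$. On the left I substitute $\pi^{(2)}_i = \mu^{(2)}_i\,\gamma^{(2)}_i = (\lambda_i^{\mathrm{del}}+\psi_i)\,\gamma^{(2)}_i$, which is just the definition of the relative utilization. On the right I write $\pi^{(2)}_j = \lambda^{(2)}_j\,\gamma^{(2)}_j$ and use the routing relation $\lambda^{(2)}_j\, p^{(2)}_{ji} = \psi_j\,\xi_{ji} + \lambda_j^{\mathrm{del}}\,\eta_{ji}$, which follows at once from $p^{(2)}_{ij} = p_i\,\xi_{ij} + (1-p_i)\,\eta_{ij}$ with $p_i = \psi_i/\lambda^{(2)}_i$. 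Collecting terms yields the stated identity. The System~1 version follows the same way, using $\mu^{(1)}_i = \lambda_i - \lambda_i^{\mathrm{del}}$ together with the splitting relation $\lambda^{(1)}_j\,p^{(1)}_{ji} = \lambda_j\,p_{ji} - \lambda_j^{\mathrm{del}}\,\eta_{ji}$ in place of its System~2 analogue. No inequalities enter here.

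For part~(1), I would observe that the reduced balance equation exhibits $\pi^{(k)} = (\pi^{(k)}_i)_{i \in S^{(k)}}$ as a left eigenvector, with eigenvalue one, of the row-stochastic routing matrix $P^{(k)} = [p^{(k)}_{ij}]$ restricted to the station nodes; equivalently, $\pi^{(k)}$ is a stationary measure of the finite Markov chain with transition matrix $P^{(k)}$. Granting that this chain is irreducible, the Perron--Frobenius theorem \cite{CDM:01} guarantees that the eigenvector is unique up to scaling and may be taken strictly positive, so $\pi^{(k)}_i > 0$ for every $i$. Since the effective service rates $\mu^{(2)}_i = \lambda_i^{\mathrm{del}}+\psi_i$ and $\mu^{(1)}_i = \lambda_i - \lambda_i^{\mathrm{del}}$ are strictly positive for a well-posed single-server node (for System~1, summing the feasibility constraint $\lambda_i^{\mathrm{del}}\,\eta_{ij} \le \lambda_i\,p_{ij}$ over $j$ gives $\lambda_i^{\mathrm{del}} \le \lambda_i$), dividing gives $\gamma^{(k)}_i = \pi^{(k)}_i/\mu^{(k)}_i > 0$.

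I expect the genuine obstacle to be the positivity claim, and specifically the justification that $P^{(k)}$ is irreducible for \emph{every} admissible rebalancing policy, so that Perron--Frobenius applies. The cleanest route is to read irreducibility off the structure of $p^{(k)}_{ij}$ as a convex combination containing a routing component that keeps every station reachable (the virtual routing $\xi$ for System~2, and the residual demand routing for System~1, each tied to the irreducibility of the underlying demand chain $\{p_{ij}\}$), and then invoke Perron--Frobenius exactly as in \cite{RZ-MP:14} to fix the sign of $\pi^{(k)}$. The algebraic identity in part~(2), by contrast, should require only the substitutions above.
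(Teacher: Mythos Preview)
The paper does not actually prove this lemma: it is restated from \cite{RZ-MP:14} ``for completeness'' and no argument is supplied here. Your proposal therefore cannot be compared line-by-line against a proof in this paper, but it is precisely the approach one would expect (and the one used in \cite{RZ-MP:14}): derive part~(2) by substituting $\pi^{(k)}_i = \mu^{(k)}_i\gamma^{(k)}_i$ and the routing decomposition into the folded balance equations, and obtain part~(1) from Perron--Frobenius applied to the reduced station chain. Both steps are carried out correctly.

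One small caveat you have already flagged: irreducibility of $P^{(k)}$ is needed for Perron--Frobenius, and the lemma is asserted ``for any'' policy, so strictly speaking you are relying on the standing irreducibility of $\{p_{ij}\}$ surviving the splitting. Relatedly, your positivity argument for $\mu^{(1)}_i$ only yields $\lambda_i^{\mathrm{del}}\le \lambda_i$, not strict inequality; the case $\lambda_i^{\mathrm{del}}=\lambda_i$ (and likewise $\lambda_i^{\mathrm{del}}+\psi_i=0$ for System~2) must be excluded as degenerate for $\gamma^{(k)}_i$ to be well-defined. These are modelling assumptions implicit in the setup rather than gaps in your reasoning.
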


In the next two lemmas, we introduce new optimization variables $\{\alpha_{ij}\}_{ij}$ and $\{\beta_{ij}\}_{ij}$ and show that the constraints $\gamma_i = \gamma_j$ in the MRP are equivalent to linear constraints in these new variables. The proofs are similar to the proof of Theorem IV.3 in \cite{RZ-MP:14}. 
\begin{lemma}[Constraint equivalence for System 1] \label{lemma:3}
Assume that $\beta_{ij}$ is given. Set $\lambda_i^{\mathrm{del}} = \sum_{j \neq i} \beta_{ij}$, $\eta_{ii} = 0$, and for $j \neq i$,
\begin{equation*}
\eta_{ij} = 
\begin{cases}
	\beta_{ij}/\lambda_i^{\mathrm{del}} & \text{if } \lambda_i^{\mathrm{del}} > 0, \\
	1/(N-1) & \text{otherwise}.
\end{cases}
\end{equation*}
With this definition, the constraint
\begin{equation}
\sum_{j \in S^{(1)}, j \neq i} (\beta_{ij} - \beta_{ji}) = \lambda_i - \sum_{j \in S^{(1)}, j \neq i}\lambda_j\, p_{ji}
\label{eq:constraintb}
\end{equation}
is equivalent to the constraint
\begin{equation*}
\gamma^{(1)}_i = \gamma^{(1)}_j, \;\;\;\; i, j \in S^{(1)}.
\end{equation*}
\end{lemma}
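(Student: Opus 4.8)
The plan is to view both the target equality $\gamma^{(1)}_i = \gamma^{(1)}_j$ and the flow-balance constraint \eqref{eq:constraintb} as two faces of the \emph{same} linear system --- the relative-throughput balance equation for System 1 --- and then to close the argument with a uniqueness result from Perron--Frobenius theory. First I would record the balance equation supplied unconditionally by Lemma \ref{lemma:2}, part 2:
\[
(\lambda_i - \lambda_i^{\mathrm{del}})\,\gamma^{(1)}_i = \sum_{j \in S^{(1)}} \gamma^{(1)}_j\,(\lambda_j\, p_{ji} - \lambda_j^{\mathrm{del}}\,\eta_{ji}).
\]
The prescribed $\eta$ makes $\lambda_j^{\mathrm{del}}\,\eta_{ji} = \beta_{ji}$, and I would check this even in the degenerate branch: if $\lambda_i^{\mathrm{del}} = \sum_{j\neq i}\beta_{ij} = 0$ then non-negativity of the $\beta$'s forces $\beta_{ij}=0$ for all $j$, so $\lambda_i^{\mathrm{del}}\eta_{ij} = 0 = \beta_{ij}$ regardless of the value $1/(N-1)$ assigned to $\eta_{ij}$. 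Thus the balance equation becomes $(\lambda_i - \lambda_i^{\mathrm{del}})\gamma^{(1)}_i = \sum_j \gamma^{(1)}_j(\lambda_j p_{ji} - \beta_{ji})$.

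The observation I would isolate next is that substituting the constant vector $\gamma^{(1)} \equiv \mathbf{1}$ collapses this balance equation exactly onto \eqref{eq:constraintb}: dividing out and using $\lambda_i^{\mathrm{del}} = \sum_{j\neq i}\beta_{ij}$ together with $p_{ii}=0$, $\beta_{ii}=0$ yields $\sum_{j\neq i}(\beta_{ij}-\beta_{ji}) = \lambda_i - \sum_{j\neq i}\lambda_j p_{ji}$. Hence \eqref{eq:constraintb} holds for every $i$ \emph{if and only if} the constant vector satisfies the balance equation. This single equivalence powers both directions. For the forward direction, if $\gamma^{(1)}_i = \gamma^{(1)}_j$ for all $i,j$, then the genuine relative-utilization vector equals $c\mathbf{1}$ and already solves the balance equation by Lemma \ref{lemma:2}; by homogeneity $\mathbf{1}$ solves it too, and the equivalence delivers \eqref{eq:constraintb}.

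For the converse I would invoke uniqueness. Writing $\pi^{(1)}_i = \gamma^{(1)}_i(\lambda_i - \lambda_i^{\mathrm{del}})$ recasts the balance equation as the stationarity equation $\pi^{(1)} = \pi^{(1)} P^{(1)}$ of the folded single-server chain, whose routing matrix $P^{(1)} = (p^{(1)}_{ij})$ is row-stochastic. By Perron--Frobenius \cite{CDM:01}, an irreducible stochastic matrix has a one-dimensional space of stationary vectors, so the solution space of the $\gamma^{(1)}$-balance equation is a single line spanned by a positive vector. When \eqref{eq:constraintb} holds the constant vector lies on this line, and the genuine $\gamma^{(1)}$ (positive by Lemma \ref{lemma:2}) lies there as well, so the two are proportional and $\gamma^{(1)}_i = \gamma^{(1)}_j$ for all $i,j$.

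I expect the main obstacle to be justifying one-dimensionality, i.e., the irreducibility needed to apply Perron--Frobenius to $P^{(1)}$. The routing matrix defined through \eqref{eq:prob2} is a signed combination of the irreducible customer chain $\{p_{ij}\}$ and the rebalancing routing $\eta_{ij}$, so irreducibility is not automatic: it must be traced back to the standing assumption that $\{p_{ij}\}$ is irreducible, together with the constraint $\lambda_i^{\mathrm{del}}\eta_{ij} \le \lambda_i p_{ij}$ that keeps $p^{(1)}_{ij}\ge 0$. An alternative, which I would fall back on if the direct route is awkward, is to argue that the matrix $M-D$ governing the $\gamma^{(1)}$-equation (with $D=\operatorname{diag}(\lambda_i-\lambda_i^{\mathrm{del}})$ and $M_{ji}=\lambda_j p_{ji}-\beta_{ji}$, whose columns sum to the constraint residual) is an irreducible singular generator and to apply the M-matrix form of Perron--Frobenius. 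The remaining steps are routine bookkeeping, the only care being the degenerate $\lambda_i^{\mathrm{del}}=0$ branch flagged above.
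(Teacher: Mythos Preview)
Your proposal is correct and follows essentially the same route as the paper: both substitute $\lambda_j^{\mathrm{del}}\eta_{ji}=\beta_{ji}$ into the balance identity of Lemma~\ref{lemma:2}, recognize that \eqref{eq:constraintb} is precisely the statement that the constant vector solves this identity, and then appeal to Perron--Frobenius on an irreducible stochastic matrix to force the solution space to be one-dimensional. The only cosmetic difference is that the paper normalizes the incoming-flow coefficients $\varphi_{ij}:=\lambda_j p_{ji}-\lambda_j^{\mathrm{del}}\eta_{ji}$ into a row-stochastic matrix $Z$ with $Z\gamma^{(1)}=\gamma^{(1)}$ (so the right Perron vector $\mathbf{1}$ is read off immediately), whereas you pass through $\pi^{(1)}$ and $P^{(1)}$; your handling of the degenerate branch $\lambda_i^{\mathrm{del}}=0$ and of the reverse implication is in fact more explicit than the paper's, and the irreducibility worry you flag is one the paper simply asserts without proof.
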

\begin{proof}
First, rewrite \eqref{eq:constraintb} in terms of $\lambda_i^{\mathrm{del}}$ and $\eta_{ij}$. We then have
\begin{equation*}
\lambda_i - \lambda_i^{\mathrm{del}} = \sum_{j \neq i} (\lambda_j\, p_{ji} - \lambda_j^{\mathrm{del}}\, \eta_{ji}).
\end{equation*}
Substituting this expression into the last statement of Lemma \ref{lemma:2}, we have
\begin{equation}
\left(\sum_{j \neq i} (\lambda_j\, p_{ji} - \lambda_j^{\mathrm{del}}\, \eta_{ji})\right) \gamma^{(1)}_i = \sum_{j \neq i} \gamma^{(1)}_j\, (\lambda_j\, p_{ji} - \lambda_j^{\mathrm{del}}\, \eta_{ji}).
\label{eq:markov}
\end{equation} 
Let $\varphi_{ij}:= \lambda_j\, p_{ji} - \lambda_j^{\mathrm{del}}\, \eta_{ji}$ and $\zeta_{ij}:= \varphi_{ij}/\sum_{j} \varphi_{ij}$. Note that $\sum_j \varphi_{ij} = \lambda_i - \lambda_i^{\mathrm{del}} = \lambda^{(1)}_i > 0$ by assumption. The variables $\zeta_{ij}$ can be considered transition probabilities of an irreducible Markov chain, and \eqref{eq:markov} can be rewritten in matrix form as $Z\gamma^{(1)} = \gamma^{(1)}$.
Matrix $Z$ is an irreducible, row stochastic matrix, so by the Perron-Frobenius theorem \cite{CDM:01}, the eigenspace associated with the eigenvalue 1 is one-dimensional. Therefore the unique solution to $Z \gamma^{(1)} = \gamma^{(1)}$ (up to a scaling factor) is the vector $(1, ..., 1)^T$, so $\gamma^{(1)}_i = \gamma^{(1)}_j$ for all $i, j$. 
\end{proof}
\begin{lemma}[Constraint equivalence for System 2] \label{lemma:4}
Assume that $\alpha_{ij}$ is given. Set $\psi_i = \sum_{j \neq i} \alpha_{ij}$, $\xi_{ii} = 0$, and for $j \neq i$,
\begin{equation*}
\xi_{ij} = 
\begin{cases}
	\alpha_{ij}/\psi_i & \text{if } \psi_i > 0, \\
	1/(N-1) & \text{otherwise}.
\end{cases}
\end{equation*}
With this definition, the constraint
\begin{equation}
\sum_{j \neq i} (\alpha_{ij} - \alpha_{ji}) = \sum_{ j \neq i} (\beta_{ji} - \beta_{ij})
\label{eq:constrainta}
\end{equation}
is equivalent to the constraint
\begin{equation*}
\gamma^{(2)}_i = \gamma^{(2)}_j, \;\;\;\; i, j \in S^{(2)}.
\end{equation*}
\end{lemma}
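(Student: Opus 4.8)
The plan is to follow the template of Lemma \ref{lemma:3}, transporting everything onto the \emph{combined} rebalancing-and-delegation flow $\alpha_{ij}+\beta_{ij}$ and then invoking Perron--Frobenius. First I would record two identities that hold under the definitions of this lemma together with the companion definitions of Lemma \ref{lemma:3}: namely $\psi_j\,\xi_{ji}=\alpha_{ji}$ and $\lambda_j^{\mathrm{del}}\,\eta_{ji}=\beta_{ji}$ for every $i,j$. Each is immediate when the relevant normalizer ($\psi_j$ or $\lambda_j^{\mathrm{del}}$) is positive and holds trivially in the degenerate case, since a vanishing normalizer forces all the corresponding flows to vanish (recall $\psi_j=\sum_{i\neq j}\alpha_{ji}$ and $\lambda_j^{\mathrm{del}}=\sum_{i\neq j}\beta_{ji}$ with $\alpha,\beta\ge 0$). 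Substituting these into the second statement of Lemma \ref{lemma:2} for System 2 rewrites it as $(\lambda_i^{\mathrm{del}}+\psi_i)\,\gamma^{(2)}_i=\sum_{j\in S^{(2)}}\gamma^{(2)}_j\,(\alpha_{ji}+\beta_{ji})$.

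Next I would reinterpret the constraint \eqref{eq:constrainta}. Collecting terms, \eqref{eq:constrainta} is equivalent to $\sum_{j\neq i}(\alpha_{ij}+\beta_{ij})=\sum_{j\neq i}(\alpha_{ji}+\beta_{ji})$, and since the left-hand side is exactly $\psi_i+\lambda_i^{\mathrm{del}}$, the constraint says precisely that the left coefficient in the balance identity above equals $\sum_{j\neq i}(\alpha_{ji}+\beta_{ji})$. Plugging this in yields $\big(\sum_{j\neq i}(\alpha_{ji}+\beta_{ji})\big)\gamma^{(2)}_i=\sum_{j}\gamma^{(2)}_j\,(\alpha_{ji}+\beta_{ji})$. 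Setting $\varphi_{ij}:=\alpha_{ji}+\beta_{ji}\ge 0$ and $\zeta_{ij}:=\varphi_{ij}/\sum_k\varphi_{ik}$ — legitimate because $\sum_k\varphi_{ik}=\psi_i+\lambda_i^{\mathrm{del}}>0$ by part 1 of Lemma \ref{lemma:2} — this becomes the fixed-point equation $Z\gamma^{(2)}=\gamma^{(2)}$ for the row-stochastic matrix $Z=(\zeta_{ij})$. Perron--Frobenius \cite{CDM:01} then forces $\gamma^{(2)}$ to be a constant vector, which is the forward implication. Conversely, if all $\gamma^{(2)}_i$ equal a common value $c>0$, dividing the balance identity by $c$ recovers $\psi_i+\lambda_i^{\mathrm{del}}=\sum_{j\neq i}(\alpha_{ji}+\beta_{ji})$, i.e.\ \eqref{eq:constrainta}, so the equivalence holds in both directions.

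The main obstacle I anticipate is verifying the hypothesis that makes Perron--Frobenius applicable, namely that $Z$ be irreducible, so that the eigenvalue $1$ has a one-dimensional eigenspace spanned by $(1,\dots,1)^T$. Here $Z$ is the normalization of the combined System-2 flow, and since $\alpha_{ji}+\beta_{ji}=\psi_j\,\xi_{ji}+\lambda_j^{\mathrm{del}}\,\eta_{ji}=\lambda^{(2)}_j\,p^{(2)}_{ji}$, the directed support of $Z$ is the reversal of the System-2 routing graph; a directed graph is strongly connected exactly when its reversal is, so I would argue irreducibility of $Z$ precisely as in Lemma \ref{lemma:3}, inheriting it from the irreducibility of the underlying chain $\{p_{ij}\}$ (the degenerate $1/(N-1)$ assignments being in place to keep the routing well defined). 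Care must also be taken that $\psi_i+\lambda_i^{\mathrm{del}}>0$ at every station, which is guaranteed by part 1 of Lemma \ref{lemma:2}; the remaining work is just the bookkeeping of substituting the definitions of $\psi_i$, $\lambda_i^{\mathrm{del}}$, $\xi_{ij}$, and $\eta_{ij}$.
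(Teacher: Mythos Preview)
Your proposal is correct and follows essentially the same approach as the paper, which omits the proof entirely with the remark that it is ``essentially identical to the proof of Lemma~\ref{lemma:3}''; your choice $\varphi_{ij}=\alpha_{ji}+\beta_{ji}=\psi_j\xi_{ji}+\lambda_j^{\mathrm{del}}\eta_{ji}$ is precisely the System-2 analogue of the paper's $\varphi_{ij}=\lambda_j p_{ji}-\lambda_j^{\mathrm{del}}\eta_{ji}$ in Lemma~\ref{lemma:3}, and the reduction to $Z\gamma^{(2)}=\gamma^{(2)}$ plus Perron--Frobenius is identical. If anything, you go slightly beyond the paper by treating the converse direction explicitly and by spelling out why $Z$ inherits irreducibility; the only minor quibble is that the positivity $\psi_i+\lambda_i^{\mathrm{del}}>0$ is not literally ``part~1 of Lemma~\ref{lemma:2}'' (which asserts $\gamma^{(2)}_i>0$) but rather the standing assumption that the service rate $\lambda^{(2)}_i=\lambda_i^{\mathrm{del}}+\psi_i$ is positive, which is implicit in the model.
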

The proof is essentially identical to the proof of Lemma \ref{lemma:3} and is omitted. Furthermore, we can substitute \eqref{eq:constraintb} into \eqref{eq:constrainta} and rewrite \eqref{eq:constrainta} as
\begin{equation}
\sum_{j \neq i} (\alpha_{ij} - \alpha_{ji}) = -\lambda_i + \sum_{j \neq i} \lambda_j p_{ji}.
\end{equation}
With this substitution, we have decoupled the original MRP constraints to those associated with System 1 ($\lambda_i^{\mathrm{del}}$ and $\eta_{ij}$) and those associated with System 2 ($\psi_i$ and $\xi_{ij}$).

\end{document}